\documentclass[11pt]{article}

\title{A new approach to Cholesky-based covariance regularization in high dimensions}
\author{Adam J. Rothman,
        Elizaveta Levina, and Ji Zhu \\
        Technical Report \# 480 \\
Department of Statistics \\
University of Michigan}

\usepackage{fullpage}

\usepackage{amsmath, amssymb, amsfonts, amscd, xspace, pifont}
\usepackage{epsfig, psfrag, pstricks}
\usepackage{float, fancybox}
\usepackage{amsthm}
\usepackage{dsfont}
\usepackage{natbib}
\bibliographystyle{apalike}

\theoremstyle{plain}

\newtheorem{proposition}{Proposition}

\newcommand{\be}{\begin{equation}}
\newcommand{\ee}{\end{equation}}

\newcommand{\V}[1]{\ensuremath{\boldsymbol{#1}}\xspace}

\newcommand{\F}[1]{\ensuremath{\mathrm{#1}}\xspace}

\newcommand{\tr}{\F{trace}}

\newcommand{\I}{\mathds{1}}

\newcommand{\argmax}{\operatorname*{argmax}}
\newcommand{\argmin}{\operatorname*{argmin}}

\def\twoImages#1#2#3#4#5#6 
{
 \centerline{\hfill\makebox[#2]{#3}\hfill\makebox[#5]{#6}\hfill}
\smallskip
  \centerline{\hfill
    \includegraphics[width=#2]{#1}
    \hfill
    \includegraphics[width=#5]{#4}
    \hfill} 
}
\def\oneImage#1#2#3
{
  \centerline{\hfill
    \includegraphics[width=#2]{#1}
    \hfill
    \smallskip}
    \centerline{\hfill\makebox[#2]{#3}\hfill\medskip}
}

\begin{document}

\maketitle
\begin{abstract}
In this paper we propose a new regression interpretation of the Cholesky factor of the covariance matrix, as opposed to the well known regression interpretation of the Cholesky factor of the inverse covariance, which leads to a new class of regularized covariance estimators suitable for high-dimensional problems.  Regularizing the Cholesky factor of the covariance via this regression interpretation always results in a positive definite estimator. In particular, one can obtain a positive definite banded estimator of the covariance matrix at the same computational cost as the popular banded estimator proposed by \citet{bl06}, which is not guaranteed to be positive definite.    We also establish theoretical connections between banding Cholesky factors of the covariance matrix and its inverse and constrained maximum likelihood estimation under the banding constraint, and compare the numerical performance of several methods in simulations and on a sonar data example. 
\end{abstract}

\section{Introduction}
Statistical inference for high-dimensional data has become increasingly necessary in recent years.  Advances in computing have made
high-dimensional data analysis possible in a number of important applications, including spectroscopy, 
fMRI, text retrieval, gene arrays, climate studies, and imaging. 
Many multivariate data analysis techniques applied to high-dimensional data require an estimate of 
the covariance matrix or its inverse; however, traditional estimation by the sample covariance matrix is known to perform poorly when there are more variables than observations ($p > n$)  -- see 
\citet{johnstone01} and references therein for a detailed discussion.  
A number of alternative estimators have been proposed for high-dimensional problems,  many of which exploit various sparsity  
assumptions about the population covariance matrix or its inverse. 

The problems of estimating the covariance matrix and its inverse are usually considered separately in this context, since in high dimensions inversion is costly and not always accurate.  When the goal is to estimate the inverse covariance matrix, also known as the concentration matrix, a popular method is to add the lasso ($\ell_1$) penalty on the 
entries of the inverse covariance matrix to the normal likelihood \citep{banerjee06, yuan07, spice, fht08}, which has been extended to
more general penalties by \citet{lam07}.  Other estimators of the inverse
covariance exploit the assumption that variables have a natural ordering, and those far apart in the ordering have small partial correlations.  
These estimators usually rely on the modified Cholesky decomposition of the inverse covariance matrix (see details in Section \ref{sec:cholfactor},
since this decomposition has a nice regression interpretation and 
regression regularization can be applied; see 
 \citet{wu03},  \citet{huang06},
\citet{bl06}, and \citet{levina_zhu06}.  

If the covariance matrix (rather than its inverse) is of interest, a simple way to improve on the sample covariance, both theoretically and in practice, 
is to threshold small elements to zero \citep{bl07, elkaroui07,genthresh}.  
Under the assumption that variables are ordered and those 
far apart in the ordering are only weakly correlated, a better option is to band or taper the sample covariance matrix \citep{bickel03, furrer06, bl06, cai08}.  These simple approaches are attractive 
for problems in very high dimensions since they have a small computational cost; however,
these estimators are not generally guaranteed to be positive definite, although some forms of tapering can
guarantee positive semi-definite estimates.  Alternatively, a positive definite constrained maximum likelihood estimator can be computed under the constraint enforcing any given 
pattern of zeros \citep{chaudhuri07}, but this algorithm is only applicable when there are fewer variables than observations ($p < n$).  

In this paper we  show that the modified Cholesky factor of the
covariance matrix (rather than its inverse) also has a natural 
regression interpretation, and therefore all Cholesky-based 
regularization methods can be applied to the covariance matrix itself instead of its inverse to obtain a sparse estimator with guaranteed positive definiteness. As with all Cholesky-based regularization methods, this approach 
exploits the assumption
of naturally ordered variables where variables far apart in the ordering tend to have small correlations.    The simplest estimator in this new class
is banding the covariance Cholesky factor. Unlike banding the covariance matrix itself, it is guaranteed to be positive definite, 
but still has the same low computational complexity.

The rest of this paper is organized as follows:  
we discuss the modified Cholesky factorization of the covariance matrix and 
its regression interpretation  in Section \ref{sec:cholfactor}. 
Regularization techniques appropriate for the Cholesky factor of covariance are described in Section \ref{sec:regchol}. In addition, we connect sparsity 
in the covariance matrix to sparsity in its Cholesky factor and use this to contrast the maximum likelihood properties of banding the Cholesky factor of 
covariance and banding the Cholesky factor of the inverse.  In particular, 
we prove that Cholesky banding of the inverse is the constrained maximum 
likelihood estimator for normal data under the constraint 
that the inverse covariance matrix is banded. Numerical performance of regularized Cholesky-based estimators of the covariance is illustrated both on simulated data (Section \ref{sec:sim}) and on a spectroscopy data example (Section \ref{sec:realdata}).  

\section{Modified Cholesky decomposition of the covariance matrix} \label{sec:cholfactor}

Throughout the paper we assume that the data $\V{X}_1, \ldots, \V{X}_n$
 are independent and identically distributed $p$-variate random vectors with population covariance matrix $\Sigma$ and, without loss of generality, mean $\V{0}$.  Let $\hat\Sigma$ denote the sample covariance matrix (the maximum likelihood version), 
 $$
 \hat\Sigma = \frac{1}{n}\sum_{i=1}^{n} (\V{X}_i - \V{\bar X})(\V{X}_i - \V{\bar X})^{T}.
 $$
 
As a tool for regularizing the inverse covariance matrix, \citet{pourahmadi99} suggested using the modified Cholesky factorization of $\Sigma^{-1}$.  This factorization arises from regressing each variable $X_j$ on $X_{j-1}, \dots, X_1$ for $2\leq j \leq p$.  Fitting regressions
  $$
  X_j   =  \sum^{j-1}_{q=1} {(-t_{jq})X_q} + \epsilon_j  = \hat X_j + \epsilon_j \ ,
  $$
let $\epsilon_j$ denote the error term in regression $j$, $j \ge 2$, and let $\epsilon_1 = X_1$. Let 
$D =  \mbox{var}(\V{\epsilon})$ be the diagonal matrix of error variances and 
$T = [t_{jq}]$ the lower-triangular matrix containing 
regression coefficients (with the opposite sign), with ones on the diagonal.  
Then writing $\V{\epsilon} = \V{X} - \V{\hat{X}} = T \V{X}$
and using the independence of errors we have, 
$$
D  =  \mbox{var}(\V{\epsilon}) =  \mbox{var}(T \V{X})= T \Sigma T^T
$$
and thus 
\begin{equation} \label{invchol}
\Sigma^{-1} = T^T D^{-1} T.
\end{equation}
This decomposition transforms inverse 
covariance matrix estimation into a regression problem, and hence regularization approaches for regression can be applied.   In practice, the coefficients are computed by regressing each variable $X_j$ on its predecessors $X_1, \dots, X_{j-1}$ (after centering all the variables).  If these regressions are not regularized, the resulting estimate is simply $\hat\Sigma^{-1}$.   {\em Banding} the Cholesky factor of the inverse refers to regularizing by only including the immediate $k$ predecessors in the regression, $X_{\max(1,j-k)}, \dots, X_{j-1}$, for some fixed $k$ \citep{wu03, bl06}.

The modified Cholesky factorization of $\Sigma$ can be obtained by simply inverting (\ref{invchol}).  Let $L=T^{-1}$ and rewrite  $\V{X} = L \V{\epsilon}$. 
Then,
\begin{equation} \label{chol}
\Sigma = {\rm var} \ (L \V{\epsilon}) = LDL^T \ .
\end{equation}
Our main interest here is in the regression interpretation of this decomposition.  By analogy to (\ref{invchol}), we can interpret (\ref{chol}) as resulting from a new sequence of regressions, where each variable $X_j$ is regressed on 
all the previous regression {\em errors} $\epsilon_{j-1}, \ldots, \epsilon_{1}$ (rather than the variables themselves).  For $j \geq 2$,  we have the sequence
of regressions,
\begin{equation} \label{cholregs}
X_j = \sum_{q=1}^{j-1} { l_{jq} \epsilon_{q}} + \epsilon_j = \tilde X_j + \epsilon_j \ .  
\end{equation}

The decompositions above apply to the population matrices.  Let 
$X = [\V{x}_1, \cdots, \V{x}_p]$ be 
  the $n\times p$ data matrix, where each column $\V{x}_j \in \mathbb{R}^{n}$ 
  is already centered by its sample mean. For the first variable, we set 
  $\V{e}_1  = \V{x}_1$.   
  For $2 \leq j \leq p$, let $\V{l}_j = (l_{j1}, \ldots, l_{j,j-1})^{T}$,     
$Z_j  = [\V{e}_1,\ldots, \V{e}_{j-1} ]$, and compute coefficients and the residual, respectively, as  
  \begin{align}
    \V{\hat l}_j  & = \argmin_{\V{l}_j} \|\V{x}_j - Z_{j} \V{l}_j \|^2 \nonumber  \ , \\
     \V{e}_{j} & = \V{x}_j - Z_{j} \V{\hat l}_j \ . \label{regress}  
  \end{align}
The variances are estimated as 
$$\hat d_{jj}  = \frac{1}{n} \|  \V{e}_{j} \|^2 \ . 
$$
Let $Z = [\V{e}_1, \cdots, \V{e}_p]$
  denote the $n\times p$ matrix of residuals from carrying out the regressions 
  in (\ref{cholregs}) sequentially.
  Here we assume that $p < n$ to ensure that all model matrices are of full column rank; Section \ref{sec:regchol} discusses the rank deficient case when $p \geq n$.  Performing the regressions in (\ref{regress}) amounts to, for each $j \geq 2$, orthogonally projecting  
  the response $\V{x}_j$ onto  the span of $\V{e}_1, \cdots, \V{e}_{j-1}$ to estimate $\V{\hat l}_{j}$.   After the last projection we have an
  orthogonal basis $\{\V{e}_1, \cdots, \V{e}_p \}$, and the estimates $\hat L$ and $\hat D$.   This algorithm is nothing but  
a scaled version of Gram-Schmidt orthogonalization of the data matrix $X$ 
for computing its QR decomposition, where the upper triangular matrix 
$R$ is restricted to have positive diagonal entries. The
  orthonormal matrix $Q$ is the matrix $Z$ with its column vectors scaled to 
  have unit length and $R^{T} =  \hat L {(n\hat D)}^{\frac{1}{2}}$.  If all regressions are fitted without any regularization, simply by least squares (as described above), the resulting estimate recovers the sample covariance matrix: 
  $$
  \hat\Sigma = \frac{1}{n} X^{T} X = \frac{1}{n} R^{T} R = \hat L \hat D \hat L^{T} \ .
  $$

\section{Regularized estimation of the Cholesky factor $L$} \label{sec:regchol}

It is clear that in order to improve on the sample covariance, the regressions in (\ref{regress}) need to be regularized.  In this section we describe several 
estimators that introduce sparsity in covariance Cholesky factor $L$.
We also connect sparsity patterns in positive definite matrices with 
sparsity patterns in their Cholesky factors and use this 
to analyze the connection between banding Cholesky factors and 
constrained maximum likelihood estimation.  

\subsection{Banding the Cholesky factor}
The simplest way to introduce sparsity in the Cholesky factor $L$ is to 
estimate only the first $k$ sub-diagonals of $L$
and set the rest to zero.  This approach for banding the Cholesky factor of the inverse was proposed by 
\citet{wu03} and \citet{bl06}.  In practice, it means that each 
variable $\V{x}_j$ is regressed
 on the $k$ previous residuals $[\V{e}_{j-k}, \ldots, \V{e}_{j-1}]$, for all $j \ge 2$.  Note that the index $j-k$ everywhere is understood to mean $\max(1,j-k)$.  Let $\V{l}_{j}^{(k)} = (l_{j,j-k}, \ldots, l_{j,j-1})^{T}$ and    $Z_{j}^{(k)}  = [\V{e}_{j-k},\ldots, \V{e}_{j-1} ]$. Then 
we  compute,
  \begin{align}
    \V{\hat l}_{j}^{(k)}  & = \argmin_{\V{l}_{j}^{(k)}} \|\V{x}_j - Z_{j}^{(k)} \V{l}_{j}^{(k)} \|^2  \ , \label{bandreg}\\
     \V{e}_{j} & = \V{x}_j - Z_{j}^{(k)} \V{\hat l}_{j}^{(k)} \ .  \nonumber
  \end{align}

In each regression, the design matrix $Z_{j}^{(k)}$ has orthogonal columns, 
which allows (\ref{bandreg}) to be solved with at most $k$ univariate regressions.  Hence the computational cost of banding the Cholesky factor in this manner 
is $O(kpn)$, the same order as banding the sample covariance matrix without the Cholesky decomposition.  To ensure
that design matrices are of full rank, the banding parameter $k$ must be less than $\min (n-1, p)$.   Also note that while each design matrix $Z_{j}^{(k)}$
has orthogonal columns, all of the residual vectors $\V{e}_{1}, \ldots, \V{e}_p$ are 
not necessarily mutually orthogonal; $\V{e}_{j}$ and $\V{e}_{j'}$  are only guaranteed to be orthogonal if $| j - j'| \le k$.  
  
\subsection{Connection to constrained maximum likelihood} \label{sec:theory}
Given that a Cholesky-based banded estimator is always positive definite, it is natural to ask whether it coincides with the maximum likelihood estimator under the banded constraint.  Here we show that, somewhat surprisingly, the answer depends on whether the banding is applied to the Cholesky factor of the 
inverse or of the covariance matrix itself: the former estimator coincides with constrained maximum likelihood estimator, and the latter does not.  In order to show this, we first establish some relationships between zero patterns in positive definite matrices and their Cholesky factors.  

\begin{proposition}\label{propzero}
Given a positive definite matrix $\Sigma$ with modified Cholesky decomposition 
$\Sigma = LDL^{T}$, where $L$ is lower triangular, 
for any row $i$ and $c(i) < i$, $\sigma_{i1} = \cdots = \sigma_{i, c(i)} = 0$ if and only if $l_{i1} = \cdots = l_{i, c(i)} = 0$.
\end{proposition}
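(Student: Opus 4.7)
The plan is to work directly with the entrywise expansion of $\Sigma = LDL^T$. Because $L$ is lower triangular with unit diagonal, for any $j \leq i$ we have
\begin{equation*}
\sigma_{ij} \;=\; \sum_{k=1}^{j} l_{ik}\, d_{kk}\, l_{jk},
\end{equation*}
where we use $l_{ik}=0$ for $k>i$ and $l_{jk}=0$ for $k>j$. A preliminary observation I would record is that since $\Sigma$ is positive definite and $L$ is invertible (its diagonal is $1$), $D = L^{-1}\Sigma L^{-T}$ is positive definite, so $d_{kk}>0$ for every $k$. This positivity is what will let us invert the relation between $L$ and $\Sigma$ row by row.

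For the ``if'' direction, I would simply substitute $l_{i1}=\cdots=l_{i,c(i)}=0$ into the formula above: for any $j \leq c(i)$, every index $k$ in the sum satisfies $k\leq j\leq c(i)$, so every $l_{ik}$ in the sum vanishes and hence $\sigma_{ij}=0$.

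For the ``only if'' direction, I would induct on $j$ from $1$ up to $c(i)$. For the base case $j=1$, the sum collapses to $\sigma_{i1}= l_{i1}\, d_{11}\, l_{11}= l_{i1}\, d_{11}$, so $\sigma_{i1}=0$ combined with $d_{11}>0$ yields $l_{i1}=0$. For the inductive step, assume $l_{i1}=\cdots=l_{i,j-1}=0$ for some $j \leq c(i)$. Then all terms in $\sigma_{ij}=\sum_{k=1}^{j} l_{ik}\, d_{kk}\, l_{jk}$ with $k<j$ vanish, leaving $\sigma_{ij}= l_{ij}\, d_{jj}\, l_{jj}= l_{ij}\, d_{jj}$. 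Since $\sigma_{ij}=0$ and $d_{jj}>0$, we conclude $l_{ij}=0$, completing the induction.

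There is no real obstacle here; the proof is essentially a bookkeeping argument. The only subtlety worth flagging is ensuring that $d_{jj}>0$ so that each inductive step can divide through by a nonzero quantity, and this follows immediately from the positive definiteness of $\Sigma$ via $D=L^{-1}\Sigma L^{-T}$.
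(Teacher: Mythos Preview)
Your proof is correct and follows essentially the same approach as the paper: both use the entrywise identity $\sigma_{ij}=\sum_{m\le j} l_{im}d_{mm}l_{jm}$ (the paper phrases the inductive step via the equivalent Cholesky recursion $l_{ij}=d_{jj}^{-1}(\sigma_{ij}-\sum_{m<j}l_{im}l_{jm}d_{mm})$ from \citet{watkins91}) and induct on $j$ to peel off $l_{ij}$. Your version is slightly more self-contained in that you justify $d_{jj}>0$ explicitly, which the paper leaves implicit.
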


\begin{proof}
Using the expression
$$
\sigma_{ij} = \sum_{m=1}^{j} {l_{im}l_{jm}d_{mm}},
$$
it is obvious that $l_{i1} = \cdots = l_{i, c(i)} = 0$ implies 
$\sigma_{i1} = \cdots = \sigma_{i, c(i)} = 0$.

Now assume $\sigma_{i1} = \cdots = \sigma_{i, c(i)} = 0$ for some $i$.  
The sequential column-wise formula for computing the modified Cholesky factorization \citep{watkins91}, is given by, for $i > j$,   
\begin{align}
d_{ii} & = \sigma_{ii} - \sum_{m=1}^{i-1} {l_{im}^2 d_{mm}} \ , \nonumber \\
l_{ij} & = \frac{1}{d_{jj}} \left(\sigma_{ij} - \sum_{m=1}^{j-1}{l_{im}l_{jm}
d_{mm}} \right) \label{lformula} \ .
\end{align}
This formula allows one to compute $L$ one column at a time, starting from the first column.   We proceed by induction:  for the
first column of $L$, $l_{i1} = \sigma_{i1}/\sigma_{11}$, hence $l_{i1} = 0$.
Assume that for some column $u < c(i)$  we have $l_{i1} = \cdots = l_{iu} = 0$,
then using (\ref{lformula}), 
$$
l_{i, u+1} = \frac{1}{d_{u+1, u+1}} \left(\sigma_{i, u+1} - \sum_{m=1}^{u}{l_{im}l_{u+1,m}d_{u+1, u+1}} \right) = \frac{\sigma_{i, u+1}} {d_{u+1, u+1}} \ , 
$$
which implies $l_{i, u+1} = 0$.
\end{proof}

Proposition \ref{propzero} states that a Cholesky factor with banded rows of arbitrary band length (by band length $k_i$ of row $i$ we mean that $k_i$ is the smallest integer such that $l_{ij} = 0$ for all 
$j <i - k_i$) corresponds to a covariance matrix with banded rows of the same 
band lengths.  In particular, the Cholesky factor $L$ is $k$-banded  
if and only if the covariance matrix itself is $k$-banded.  An analogous result holds for the inverse covariance matrix $\Omega$, with rows replaced by columns.
\begin{proposition} \label{invconstraint}
For a positive definite matrix $\Omega$ with modified Cholesky decomposition $T^{T} D^{-1} T = \Omega$, where $T$ is lower triangular, 
for any column $j$ and $r(j) > j$, $\omega_{p,j} = \cdots = \omega_{r(j), j} = 0$ if and only if $t_{p,j} = \cdots = t_{r(j), j} = 0$. 
\end{proposition}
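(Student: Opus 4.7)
My plan is to reduce Proposition~2 to Proposition~1 by a reversal-permutation argument. Let $J$ denote the $p\times p$ anti-diagonal permutation matrix, so that $J=J^T=J^{-1}$ and conjugation by $J$ reverses the order of the rows and columns of any matrix. The guiding observation is that zeros at the \emph{bottom} of a column of $\Omega$, combined with the symmetry of $\Omega$, correspond under $J$-conjugation to zeros at the \emph{beginning} of a row in the transformed matrix --- exactly the pattern handled by Proposition~1.

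Concretely, I would set $\tilde\Omega=J\Omega J$, $\tilde L=JT^TJ$, and $\tilde D=JDJ$. A short calculation shows that $\tilde L$ is lower triangular with unit diagonal (since $T^T$ is upper triangular with unit diagonal, and $J$-conjugation swaps upper- and lower-triangularity while fixing the diagonal), that $\tilde D^{-1}=JD^{-1}J$ is a positive diagonal matrix, and that
$$
\tilde\Omega \;=\; J T^T D^{-1} T\,J \;=\; (J T^T J)(J D^{-1} J)(J T J) \;=\; \tilde L\,\tilde D^{-1}\,\tilde L^T.
$$
Thus $\tilde\Omega$ admits a modified Cholesky factorization with lower-triangular factor $\tilde L$ and positive diagonal factor $\tilde D^{-1}$, and Proposition~1 applies verbatim (its proof only uses that the diagonal factor is positive diagonal, not any particular form).

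Finally, I would translate the zero pattern back and forth through the reversal. By symmetry of $\Omega$, the hypothesis $\omega_{r(j),j}=\cdots=\omega_{p,j}=0$ is equivalent to $\omega_{j,r(j)}=\cdots=\omega_{j,p}=0$, and the identity $\tilde\omega_{a,b}=\omega_{p+1-a,\,p+1-b}$ converts this into the vanishing of the first $p+1-r(j)$ entries of row $p+1-j$ of $\tilde\Omega$. Since $r(j)>j$, the corresponding cutoff $c(p+1-j)=p+1-r(j)$ satisfies $c(i)<i$, so Proposition~1 applies and produces the equivalent zero statement on row $p+1-j$ of $\tilde L$. Unwinding $\tilde l_{a,b}=t_{p+1-b,\,p+1-a}$ then turns this into exactly $t_{r(j),j}=\cdots=t_{p,j}=0$, closing the equivalence. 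The only real obstacle here is bookkeeping: making sure that triangularity and the unit diagonal survive the $J$-conjugation, and that the chain of index translations (symmetry swap, reversal, column-to-row) lines up with the form of Proposition~1. Once that dictionary is in place, no new computation is required beyond what Proposition~1 already gives.
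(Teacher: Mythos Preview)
Your reduction to Proposition~1 via the reversal permutation $J$ is correct. The key checks --- that $\tilde L=JT^TJ$ is unit lower triangular, that $\tilde D^{-1}=JD^{-1}J$ is positive diagonal, and that $\tilde\Omega=\tilde L\,\tilde D^{-1}\tilde L^T$ is therefore the modified Cholesky decomposition of the positive definite matrix $\tilde\Omega=J\Omega J$ --- all go through, and your index translations (symmetry, then $a\mapsto p+1-a$) correctly map the bottom-of-column zero pattern in $\Omega$ to the start-of-row pattern that Proposition~1 handles, with $i=p+1-j$ and $c(i)=p+1-r(j)<i$.

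The paper does not give a proof; it simply states that the argument is ``similar to that of Proposition~1 and is omitted,'' which indicates a direct rerun of the Proposition~1 computation: write $\omega_{ij}=\sum_{m\ge i} t_{mi}t_{mj}d_{mm}^{-1}$ for $i>j$, read off the easy implication, and for the converse derive the backward recursion for $t_{ij}$ (analogous to~(\ref{lformula}) but proceeding from the last column/row inward) and induct from $i=p$ up to $i=r(j)$. Your route is genuinely different and arguably cleaner: rather than reproducing the induction, you exhibit a symmetry of the problem that makes Proposition~2 a formal corollary of Proposition~1. The cost is the bookkeeping of the $J$-conjugation dictionary; the benefit is that no new recursion or computation is needed. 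The paper's implied direct approach is more self-contained but repeats essentially the same work.
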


The proof of Proposition \ref{invconstraint} is similar to that of Proposition \ref{propzero} and is omitted.
Proposition \ref{invconstraint} states that the modified Cholesky factor of the inverse $T$ with arbitrary 
column band lengths corresponds to an inverse covariance matrix $\Omega$ with the same column band lengths, and thus an inverse covariance matrix is $k$-banded if and only if its Cholesky factor is $k$-banded.

With these propositions we can investigate maximum likelihood properties of 
Cholesky and inverse Cholesky banding.

\begin{proposition} \label{invcmle}
Banding the modified Cholesky factor $T$ of the inverse covariance matrx $\Omega$ maximizes the normal likelihood subject to the banded constraint,  $\omega_{ij} = 0$ for $|i-j| > k$.
\end{proposition}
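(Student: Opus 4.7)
The plan is to reparametrize the normal log-likelihood using the modified Cholesky decomposition $\Omega = T^T D^{-1} T$, translate the banding constraint on $\Omega$ into an equivalent constraint on $T$ via Proposition \ref{invconstraint}, and then observe that the resulting constrained optimization decouples across the rows of $T$ into $p$ independent least-squares problems that coincide with the regressions defining the banded Cholesky estimator.

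Concretely, I would first write the log-likelihood (up to additive constants) as
\begin{equation*}
\ell(\Omega) \;=\; \tfrac{n}{2}\log\det\Omega - \tfrac{n}{2}\tr(\hat\Sigma\,\Omega).
\end{equation*}
Since $T$ is unit lower triangular, $\det T = 1$ and $\det\Omega = \prod_j d_{jj}^{-1}$, while
\begin{equation*}
\tr(\hat\Sigma\,T^T D^{-1} T) \;=\; \sum_{j=1}^{p}\frac{1}{d_{jj}}\,\V{t}_j^T \hat\Sigma\,\V{t}_j,
\end{equation*}
where $\V{t}_j$ denotes the $j$th row of $T$ (viewed as a column vector, with $t_{jj}=1$ and $t_{jq}=0$ for $q>j$). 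By Proposition \ref{invconstraint}, the constraint $\omega_{ij}=0$ for $|i-j|>k$ is equivalent to $t_{ij}=0$ whenever $i-j>k$, a constraint that involves each row of $T$ separately. Because $D$ is diagonal and the trace also splits row-wise, the negative log-likelihood decomposes into a sum of terms, one per row $j$, of the form $\log d_{jj} + d_{jj}^{-1}\V{t}_j^T \hat\Sigma\,\V{t}_j$, subject to $t_{jj}=1$ and $t_{jq}=0$ for $q<\max(1,j-k)$ and $q>j$.

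Next I would optimize each term. Profiling out $d_{jj}$ for fixed $\V{t}_j$ gives $\hat d_{jj} = \V{t}_j^T \hat\Sigma\,\V{t}_j$, reducing the per-row problem to minimizing $\V{t}_j^T \hat\Sigma\,\V{t}_j$ over the free entries $t_{j,\max(1,j-k)},\dots,t_{j,j-1}$ with $t_{jj}=1$. Writing $\V{t}_j^T \hat\Sigma\,\V{t}_j = \tfrac{1}{n}\|\V{x}_j - Z_{j}^{(k)}(-\V{t}_j^{(k)})\|^2$, where $\V{t}_j^{(k)} = (t_{j,j-k},\dots,t_{j,j-1})^T$ and $Z_j^{(k)}=[\V{x}_{j-k},\dots,\V{x}_{j-1}]$ (the design matrix from Pourahmadi's regressions for the inverse), this is exactly the least-squares objective whose minimizer gives the banded estimate of the $j$th row of $T$ and whose minimum value equals $n\hat d_{jj}$. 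Assembling the rows recovers the banded Cholesky estimator $\hat T, \hat D$, so the corresponding $\hat\Omega = \hat T^T \hat D^{-1} \hat T$ is the constrained MLE.

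The main obstacle is simply the bookkeeping: confirming that the objective truly separates row-by-row (which relies on $D$ being diagonal together with $\det T=1$), handling the boundary rows $j\leq k$ correctly, and noting that the profiled objective $\log(\V{t}_j^T\hat\Sigma\V{t}_j)$ is an increasing function of $\V{t}_j^T\hat\Sigma\V{t}_j$, so minimizing the log reduces to the usual least-squares problem (assuming $\hat\Sigma$ restricted to each contiguous window of size $k+1$ is nonsingular, which is implicit in the banded regression procedure being well-defined).
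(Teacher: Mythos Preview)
Your proposal is correct and follows essentially the same route as the paper: reparametrize the likelihood via $\Omega = T^T D^{-1} T$, invoke Proposition~\ref{invconstraint} to translate the banded constraint on $\Omega$ into a banded constraint on $T$, observe that the objective separates across rows of $T$, and identify each row-wise minimization with the OLS regression of $\V{x}_j$ on $\V{x}_{j-k},\dots,\V{x}_{j-1}$. The only cosmetic differences are that the paper inserts an explicit differential argument to justify that critical points match under the reparametrization (which you handle more directly via the bijection implied by Proposition~\ref{invconstraint}), and that you profile out $d_{jj}$ first whereas the paper optimizes the joint per-row criterion $g_j$; neither changes the substance.
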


\begin{proof}
Let $\Omega_{(k)}$ be a symmetric positive definite matrix with 
$k$ non-zero main sub-diagonals, i.e., $\omega_{(k)ij} = 0$ for $|i - j| > k$. 
The negative normal log-likelihood of $\V{x}_1, \dots, \V{x}_n \ \sim \ N(\V{0}, \Omega_{(k)}^{-1})$, up to a constant, is given by,
$$
f(\Omega_{(k)}) = \tr (\hat\Sigma \Omega_{(k)}) - \log |\Omega_{(k)}|,
$$
where $f$ is a function of the non-zero unique parameters in $\Omega_{(k)}$.  
The $k$-banded constrained maximum likelihood estimator $\hat\Omega_{(k)}$ satisfies $\nabla f(\hat\Omega_{(k)}) = 0$.  Let  $T_{(k)}^{T} D_{(k)}^{-1}T_{(k)} =  \Omega_{(k)}$ be the modified Cholesky decomposition of $\Omega_{(k)}$.  
By Proposition 
\ref{invconstraint}, $t_{(k)ij} = 0$ for $|i-j| > k$.  Let $g(T_{(k)}, D_{(k)}) \equiv f(T_{(k)}^{T} D_{(k)}^{-1}T_{(k)})$, where $g$ is a function of non-zero unique parameters in $(T_{(k)}, D_{(k)})$.  

We continue by establishing that if $\nabla g(\hat T_{(k)}, \hat D_{(k)}) = 0$ then
$\hat T_{(k)}^{T} \hat D_{(k)}^{-1} \hat T_{(k)} =  \hat\Omega_{(k)}$.  Let  
$h(T_{(k)}, D_{(k)}) = T_{(k)}^{T} D_{(k)}^{-1}T_{(k)}$.  Denote the differential of $h$ in the direction $u = (A_{T}, A_{D})$
 evaluated at $(T_{(k)}, D_{(k)})$, by $\nabla h (T_{(k)}, D_{(k)})[u]$.  Then
 \begin{equation}
 \nabla h (T_{(k)}, D_{(k)}) [u]
 = T_{(k)}^{T} D_{(k)}^{-1} A_{T} + A_{T}^{T} D_{(k)}^{-1}T_{(k)} - T_{(k)}^{T} D_{(k)}^{-2} A_{D}T_{(k)} \ ,
 \end{equation}
 where $A_{T}$ is written as a $p \times p$ matrix with non-zero 
entries in the same positions as the
 non-zero lower triangular entries in $T_{(k)}$, and $A_{D}$ is written as a $p\times p$ diagonal matrix.
 Since the diagonal entries of $T _{(k)}$ are all equal to 1 and the diagonal entries of $D_{(k)}$ are positive, 
 one can show by induction that $\nabla h (T_{(k)}, D_{(k)}) [u] = 0$ implies $
u = 0$.  By the chain rule, we have that
 $$
 \nabla g(T_{(k)}, D_{(k)}) [u] = \nabla f(T_{(k)}^{T}D_{(k)}^{-1}T_{(k)})[u] 
\cdot \nabla h (T_{(k)},D_{(k)}) [u] \ .
 $$
 Since $f$ is convex with unique minimizer $\hat\Omega_{(k)}$ it follows that $\nabla f(T_{(k)}^{T}D_{(k)}^{-1}T_{(k)})[u] = 0$ if and only if $T_{(k)}^{T}D_{(k)}^{-1}T_{(k)} = \hat\Omega_{(k)}$ unless $u =0$. 
Hence we have that $\nabla g(T_{(k)}, D_{(k)}) [u] = 0$ iff $\nabla f(T_{(k)}^{T}D_{(k)}^{-1}T_{(k)})[u] = 0$  and  
$\hat T_{(k)}^{T} \hat D_{(k)}^{-1} \hat T_{(k)} =  \hat\Omega_{(k)}$.
 
Minimizing $g(T_{(k)}, D_{(k)})$, which can be expressed as,
$$
g(T_{(k)}, D_{(k)}) = \sum_{j=1}^{p} {\left( n \log d_{(k)jj} +\sum_{i=1}^{n} {\frac{1}{d_{(k)jj}}  
  \left(x_{ij} + \sum_{v=j-k}^{j-1}{t_{(k)jv} x_{iv}} \right)^2}       \right)}, 
$$
is equivalent to minimizing,
$$
g_{j}(t_{(k)j,j-k}, \ldots, t_{(k)j,j-1}, d_{(k)jj} ) = n \log d_{(k)jj} +\sum_{i=1}^{n} {\frac{1}{d_{(k)jj}}  
  \left(x_{ij} - \sum_{v=j-k}^{j-1}{(-t_{(k)jv}) x_{iv}} \right)^2},
$$
for each row $1 \leq j \leq p$.  For row $j$, the solution 
to $\nabla g_{j}(\hat t_{(k)j,j-k}, \ldots, \hat t_{(k)j,j-1}, \hat d_{(k)jj} ) = 0$, gives exactly the
ordinary least squares regression coefficients (with the opposite sign) 
from regressing  $\V{x}_j$ on $\V{x}_{j-k}, \ldots, \V{x}_{j-1}$, and
the sample variance of the $n$ residuals from this fit.  Thus the solution coincides with the output of the inverse Cholesky banding algorithm.
\end{proof}

Next, we show that banding the Cholesky factor of the covariance matrix itself 
does not give the constrained maximum likelihood estimator.  This is due to the inverse being the natural canonical parameter of the multivariate normal distribution.  

\begin{proposition} \label{counterex}
Banding the modified Cholesky factor $L$ of the covariance matrix 
$\Sigma$ does not maximize the normal likelihood
under the constraint that $\sigma_{ij} = 0$ for $|i-j| > k$.
\end{proposition}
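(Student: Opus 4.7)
The plan is to produce an explicit counterexample showing that the banding-of-$L$ estimator fails to satisfy the first-order conditions of the constrained likelihood problem, and so cannot be its maximizer. By Proposition \ref{propzero}, the sparsity constraint $\sigma_{ij}=0$ for $|i-j|>k$ is equivalent to requiring $L$ to be $k$-banded, so the constrained MLE problem can be recast as minimizing
\[
g(L,D) = \sum_{j=1}^p \log d_{jj} + \tr\!\bigl(D^{-1} L^{-1} \hat\Sigma L^{-T}\bigr)
\]
over unit lower triangular $k$-banded $L$ and positive diagonal $D$. Profiling out $D$ gives $d_{jj}^{*}(L)=(L^{-1}\hat\Sigma L^{-T})_{jj}$ and a profile objective $\psi(L)=\sum_{j=1}^p \log(L^{-1}\hat\Sigma L^{-T})_{jj}$ whose stationarity in the free band entries of $L$ is necessary at the MLE.

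I would then specialize to $p=3$, $k=1$, where $L$ has just two free entries $l_{21}$ and $l_{32}$. A direct inversion gives
\[
L^{-1} = \begin{pmatrix} 1 & 0 & 0 \\ -l_{21} & 1 & 0 \\ l_{21}l_{32} & -l_{32} & 1 \end{pmatrix},
\]
so $L^{-1}$ is not $k$-banded: the fill-in entry $l_{21}l_{32}$ in position $(3,1)$ couples the two parameters inside $\psi$. Differentiating $\psi$ with respect to $l_{21}$ and then substituting the banding algorithm's value $l_{21}^{\text{band}}=\hat\sigma_{12}/\hat\sigma_{11}$, the contribution from the $j=2$ summand vanishes (this is precisely what the first univariate regression is designed to achieve), but the contribution from the $j=3$ summand collapses, after algebraic cancellation, to a nonzero multiple of $l_{32}^{\text{band}}\,\hat\sigma_{13}$.

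To close the argument I would choose any positive definite $\hat\Sigma$ with $\hat\sigma_{13}\ne 0$ and $\hat\sigma_{11}\hat\sigma_{23}\ne\hat\sigma_{12}\hat\sigma_{13}$ (the latter ensuring $l_{32}^{\text{band}}\ne 0$); generic sample covariances satisfy both. For such $\hat\Sigma$ the gradient of $\psi$ at the banding-of-$L$ estimator is nonzero, so this estimator is not even a stationary point of the likelihood on the constraint set, let alone its maximizer. The main obstacle, and the substantive content behind the result, is the structural fact that $L^{-1}$ acquires strictly lower-triangular fill-in below the band whenever $L$ is $k$-banded; this coupling is absent for banded $T$ in Proposition \ref{invcmle}, which is precisely why inverse-Cholesky banding coincides with the constrained MLE whereas covariance-Cholesky banding does not.
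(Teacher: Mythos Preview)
Your proposal is correct and follows essentially the same route as the paper: both specialize to $p=3$, $k=1$, rewrite the constrained likelihood in terms of the two free entries $l_{21},l_{32}$ via $T=L^{-1}$, differentiate with respect to $l_{21}$, and observe that at the banding solution the derivative reduces to a nonzero multiple of $\hat l_{32}\,\hat\sigma_{13}$ (equivalently $\hat l_{32}\,\V{x}_1^{T}\V{x}_3$), which is generically nonzero. The only cosmetic differences are that you profile out $D$ before differentiating whereas the paper carries $D$ along, and you state explicit algebraic conditions on $\hat\Sigma$ where the paper simply invokes ``with probability~1''; your remark about the fill-in entry $l_{21}l_{32}$ in $L^{-1}$ being the structural source of the failure is a nice addition that the paper leaves implicit.
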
 

\begin{proof}
  We show that the first-order necessary condition for optimality is not met 
using $p=3$ variables.  
  Let the function $g$ be the negative normal log-likelihood parameterized by the inverse Cholesky factor $T = L^{-1}$ and $D$,
  which is given up to a constant by, 
  \begin{equation} \label{liketd}
  g(T,D) \equiv \ell(T^{T}D^{-1}T) = \sum_{j=1}^{p} {\left( n \log d_{jj} +\sum_{i=1}^{n} {\frac{1}{d_{jj}}  
  \left(x_{ij} + \sum_{v=1}^{j-1}{t_{jv} x_{iv}} \right)^2}       \right)} 
  \end{equation}
  Consider a $3\times 3$ covariance matrix $\Sigma$ with
  the banding constraint $\sigma_{31}=\sigma_{13} = 0$.  This constraint is
 equivalent  to $l_{31} = 0$ by Proposition \ref{propzero}. 
  The inverse Cholesky factor $T$ in terms of the entries in the Cholesky factor $L$ is given by,
  $$
  T = 
\left( \begin{array}{ccc}
	1 & 0 & 0\\
	-l_{21} & 1 & 0\\
	-l_{31} + l_{32} l_{21} & -l_{32} & 1 \\
\end{array} \right)
  $$
Minimizing the negative log-likelihood subject to $l_{31}=0$ is equivalent
to minimizing the unconstrained function
\begin{align*}
b(l_{21}, l_{32}, D)  =  n \sum_{j=1}^{3} { \log d_{jj} } + \frac{1}{d_{11}} \| \V{x}_{1} \|^2
+ \frac{1}{d_{22}} \| \V{x}_{2}-l_{21}\V{x}_{1} \|^2 
+ \frac{1}{d_{33}} \| \V{x}_{3} + l_{32}l_{21} \V{x}_{1} - l_{32} \V{x}_{2} \|^2 \ .
\end{align*}
Taking the partial derivative of $b$ with respect to $l_{21}$,
$$
\frac{\partial }{\partial l_{21}} b(l_{21}, l_{32}, D)
= \frac{1}{d_{22}} \left(  2 l_{21} \V{x}_{1}^{T}  \V{x}_{1} - 2 \V{x}_{1}^{T}  \V{x}_{2} \right)
 + \frac{1}{d_{33}}  \left(  2 l_{32} \V{x}_{1}^{T}  \V{x}_{3} - 2 l_{32}^{2} \V{x}_{1}^{T}  \V{x}_{2} 
 + 2 l_{21} l_{32}^2  \V{x}_{1}^{T}  \V{x}_{1} \right),
$$
and evaluating at the Cholesky banding solution gives, 
$$
\frac{\partial }{\partial l_{21}} b(\hat l_{21}, \hat l_{32}, \hat D)
 = \frac{2 \hat l_{32} \V{x}_{1}^{T}  \V{x}_{3}}{\hat d_{33}}.
$$
Since $\frac{\partial }{\partial l_{21}} b(\hat l_{21}, \hat l_{32}, \hat D) \neq 0$ with 
probability 1, the Cholesky banding solution 
does not satisfy the first-order necessary condition
for being an optimum of an unconstrained differentiable function $b$, 
and hence Cholesky banding does not maximize
the constrained normal likelihood.
\end{proof}

The constrained maximum likelihood estimator can be computed by the algorithm proposed by \citet{chaudhuri07}, but this algorithm only works for $p < n$.  We are not aware of suitable constrained maximum likelihood estimation algorithms for $p > n$, which makes banding the Cholesky factor a more attractive option for computing a positive definite estimator for large $p$.  In Section \ref{sec:sim}, we briefly compare 
the numerical performance of banding the Cholesky factor to the constrained 
maximum likelihood estimator when $p < n$, and find that the two estimators are in practice very close.

\subsection{The penalized regression approach} 
\label{penols}
Instead of banding the Cholesky factor, more sophisticated regularization approaches can be applied to regressions involved in the computation.  In general,  
for $2 \leq j \leq p$ we can estimate the Cholesky factor by,
\begin{equation} \label{penregress}
 \V{\hat l}_j  = \argmin_{\V{l}_j} \{ \|\V{x}_j - Z_{j} \V{l}_j \|^2 + P_{\lambda} (\V{l}_j) \}.
\end{equation}
Penalty functions $P_{\lambda}$ that encourage sparsity in the coefficient vector 
$\V{l}_j$ are of particular interest.  \citet{huang06} applied the lasso 
penalty in the inverse covariance Cholesky estimation problem, and here we can analogously use 
$$
P_{\lambda}^{L} (\V{l}_j)  = \lambda \sum_{t=1}^{j-1} {|l_{jt}|}.
$$
The lasso penalty function can result in zeros in arbitrary locations in the Cholesky factor, which may or may not lead to any zeros in the resulting covariance matrix.   To impose additional structure, \citet{levina_zhu06} proposed the 
nested lasso penalty, which in our context is given by,
\begin{equation} \label{j0}
P_{\lambda}^{NL} (\V{l}_j) = \lambda \left( |l_{j,j-1}| + 
  \frac{|l_{j,j-2}|}{|l_{j,j-1}|} +
  \frac{|l_{j,j-3}|}{|l_{j,j-2}|} + \cdots +
  \frac{|l_{j,1}|}{|l_{j,2}|} 
\right) \ ,  
\end{equation}
where 0/0 is defined as 0.  This penalty imposes the restriction that $l_{jt} = 0$ if $l_{j,t+1}=0$.  By Proposition \ref{propzero}, this means that all the zeros estimated in the Cholesky factor $\hat L$ will be preserved in $\hat \Sigma$.   This is not the case in the inverse Cholesky decomposition for which this penalty was originally proposed by  \citet{levina_zhu06}, although some (not all) zeros are preserved in that case as  well. 

In practice, \citet{levina_zhu06} recommend using a slightly modified 
version of (\ref{j0}) where the first term is divided by the univariate regression coefficient from regressing $\V{x}_j$ on $\V{e}_{j-1}$ alone, to address a potential difference of scales, which is the version we used in simulations.  
Note that both lasso and nested lasso have much higher computational 
cost than banding, and are not appropriate for very large $p$; however, 
the additional flexibility of the sparsity structure may work 
well in some cases.

\section{Numerical results} \label{sec:sim}
In this section we present a simulation study which compares 
the performance of all the covariance estimators discussed in 
Section \ref{sec:regchol}, banding the sample covariance matrix directly 
\citep{bl06}, and, as a benchmark, the shrinkage estimator of \citet{ledoit03}. The main difference between banding the
sample covariance directly and regularizing the Cholesky factor is the guaranteed positive definiteness of the latter. The Ledoit-Wolf estimator is 
a linear combination of the identity matrix and the sample covariance matrix, where linear
coefficients are estimates of asymptotically optimal coefficients under Frobenius loss; it does not introduce any sparsity.

\subsection{Simulation Settings}
We consider two standard covariance structures for ordered variables,
\begin{enumerate}
\item $\Sigma_1$: $\sigma_{ij} = (7/10)^{|i-j|}$ ; 
\item $\Sigma_2: \sigma_{ij} = \I(i=j)+ (4/10) \I(|i-j|=1) + (2/10) \I(|i-j| = 2)+ (2/10) \I(|i-j| = 3)+ (1/10) \I(|i-j|=4)$.
\end{enumerate}
The AR(1) model $\Sigma_1$ has a dense Cholesky factor while the MA(4) 
model $\Sigma_2$ is a banded matrix with $k=4$, and therefore its Cholesky factor is also 4-banded.  The model $\Sigma_1$ was considered by \citet{bl06}, and 
$\Sigma_2$ by \citet{yuan07}.

We generate $n=100$ training observations and another 100 
independent validation observations from $N_{p}(\V{0}, \Sigma)$. 
The dimensions considered were $p = 30$, 100, 200, 500, and 1000.    
Note that lasso and nested lasso were not run for $p=500$ and $1000$ due to their high computational cost. Tuning parameters were selected by minimizing the Frobenius norm 
($\|M\|_{F}^2 = \sum_{i,j} {m_{ij}^2}$) of the difference between the regularized estimate computed with the training observations and the sample covariance computed with the validation observations. Alternatively, 
one could select tuning parameters using the random-splitting scheme of \citet{bl06}, which we use in the data example in Section \ref{sec:realdata}.  The whole process was repeated 50 times. 

To compare estimators, we used the operator norm loss, also known as the matrix 2-norm ($\|M\|^2 = \lambda_{\max}(MM^{T})$), of the difference between the covariance estimator and the truth, 
 $$
 \Delta(\hat\Sigma, \Sigma) = E \| \hat\Sigma-\Sigma\|.
 $$
We also compute the true positive rate (TPR) and true negative rate (TNR), 
defined as 
\begin{align}
\label{tpr}
{\rm TPR}(\hat\Sigma, \Sigma) & = \frac{\#\{(i,j) : \hat\sigma_{ij} \neq 0 \quad {\rm and} \quad \sigma_{ij} \neq
  0 \}} 
{\#\{(i,j): \sigma_{ij} \neq 0 \}} \ , \\
\label{tnr}
{\rm TNR}(\hat\Sigma, \Sigma) & = \frac{\#\{(i,j) : 
  \hat\sigma_{ij} = 0 \quad {\rm and} \quad \sigma_{ij} = 0 \}} 
{\#\{(i,j): \sigma_{ij} = 0 \}} \ .
\end{align}
Note that the sample covariance has a true positive rate of 1, and 
a diagonal estimator has a true negative rate of 1.  
Additionally we measure eigenspace agreement between the estimate 
and the truth using the measure,  for $q = 1, \dots, p$ 
\begin{equation}\label{kq}
K(q) = \sum_{i=1}^{q} {\sum_{j=1}^{q}{ ( \V{\hat{e}}_{(i)}^{T} \V{e}_{(j)})^2  } },
\end{equation}
introduced  by \citet{krzanowski79}, where  $\V{\hat{e}}_{(i)}$ denotes the estimated eigenvector
corresponding to the $i$-th largest estimated eigenvalue, and
$\V{e}_{(i)}$ the  
true eigenvector corresponding to the $i$-th largest true 
eigenvalue. Note that $K(q) = q$ indicates perfect agreement of the eigenspaces 
spanned by the first $q$ eigenvectors. 

\subsection{Results}
\begin{table}[!ht]
\caption{Operator Norm Loss, average(SE) over 50 replications}
\begin{center}
{\small
\begin{tabular}{|l|cccccc|}
\hline
$p$ & Sample & Ledoit-Wolf & Sample Banding & Cholesky Banding & Lasso & Nested Lasso\\
\hline
& \multicolumn{6}{c|}{$\Sigma_1$}\\
\hline
30 & 1.75(0.04)  & 1.67(0.04)  & 1.27(0.04)  & 1.27(0.03)  & 1.68(0.05)  & 1.45(0.04)  \\
100 & 4.14(0.07)  & 3.06(0.03)  & 1.58(0.03)  & 1.56(0.03)  & 3.50(0.03)  & 1.78(0.03)  \\
200 & 6.55(0.07)  & 3.79(0.02)  & 1.75(0.03)  & 1.74(0.03) & 3.90(0.01)  & 1.93(0.03) \\
500 & 12.57(0.08)  & 4.42(0.01)  & 1.95(0.03)  & 1.91(0.02)  &--&--\\ 
1000 & 20.65(0.09)  & 4.64(0.00)  & 2.08(0.03)  & 2.00(0.02) &--&-- \\
\hline
& \multicolumn{6}{c|}{$\Sigma_2$}\\
\hline
30 & 1.44(0.03)  & 1.13(0.02)  & 0.77(0.02)  & 0.75(0.02)  & 1.23(0.02)  & 0.88(0.02)  \\
100 & 3.34(0.04)  & 1.64(0.01)  & 0.92(0.02)  & 0.89(0.02)  & 1.63(0.01)  & 1.00(0.01)  \\
200 & 5.36(0.04)  & 1.78(0.00)  & 0.99(0.02)  & 0.93(0.02)  & 1.71(0.00)  & 1.08(0.01)  \\  
500 & 10.36(0.05)  & 1.84(0.00)  & 1.09(0.02)  & 1.05(0.02)  &--&--\\
1000 & 17.60(0.07)  & 1.85(0.00)  & 1.19(0.02)  & 1.14(0.02)  &--&--\\
\hline
\end{tabular}}
\label{machold}
\end{center}
\end{table}

The averages and standard errors over 50 replications of the 
operator norm loss for both models are given in Table \ref{machold}.  
One can see that banding the Cholesky 
factor provides the best performance in every case.  It 
outperforms banding the sample covariance directly, particularly in high 
dimensions, and both banding methods outperform 
the Ledoit-Wolf estimator as well as both regularized regression methods.  

The banded maximum likelihood estimator was also computed using the algorithm 
of \citet{chaudhuri07} for $p = 30$ (the algorithm is only applicable when $p < n$).  Its loss values are 1.27(0.04) for $\Sigma_1$  and 0.76(0.02) for 
$\Sigma_2$, which are essentially the same as those for Cholesky banding 
for $p=30$.  As expected, the margin by which sparse regularized estimators 
outperform non-sparse estimators (the sample and Ledoit-Wolf) is larger 
for the sparse population covariance $\Sigma_2$.

For the sparse matrix $\Sigma_2$, we also report true positive and true negative rates of estimating zeros  in Table \ref{cholsparsity}.  These rates also depend on the tuning parameter ($k$ for banding and $\lambda$ for 
the lasso and nested lasso).  
Both Cholesky banding and sample covariance banding have perfect true negative rates, meaning that all of the
realizations had at most 4 non-zero sub-diagonals.  We see a better true positive rate for banding the Cholesky factor than for banding the sample covariance matrix, which means that banding the sample tends to set more diagonals to zero than necessary.   This is partly because the entries on the fourth sub-diagonal of $\Sigma_2$ are quite small.    
 The lasso method has a low true negative rate, which is expected since zeros in the Cholesky factor are not preserved, and the nested lasso does reasonably well on both but not as well as Cholesky banding.

\begin{table}[!ht]
\caption{True Positive/True Negative Ratea for $\Sigma_2$ (\%), average(SE) over 50 replications}
\begin{center}
{\small
\begin{tabular}{|l|cccc|}
\hline
$p$ & Banding & Cholesky Banding & Lasso & Nested Lasso \\
\hline
30 &  88.18(1.69)  / 100(0)   &  91.00(1.78)  / 100(0)   & 99.71(0.08)  / 3.86(0.40)  & 93.89(0.75)  / 88.9(0.88)  \\
100 & 88.68(1.75)  / 100(0)   & 94.09(1.50)  / 100(0)   & 90.69(0.27)  / 36.45(0.40)  & 94.44(0.36)  / 97.07(0.16)  \\ 
200 & 88.59(1.77)  / 100(0)   & 95.04(1.42)  / 100(0)  & 90.76(0.18)  / 34.63(0.28)  & 94.01(0.32)  / 98.72(0.05)  \\
500 & 88.04(1.78)  / 100(0)   & 96.01(1.31)  / 100(0) & --& -- \\
1000& 87.02(1.78)  / 100(0)   & 96.51(1.24)  / 100(0) & -- & --  \\ 
\hline
\end{tabular}}
\label{cholsparsity}
\end{center}
\end{table}

\begin{figure}
\twoImages{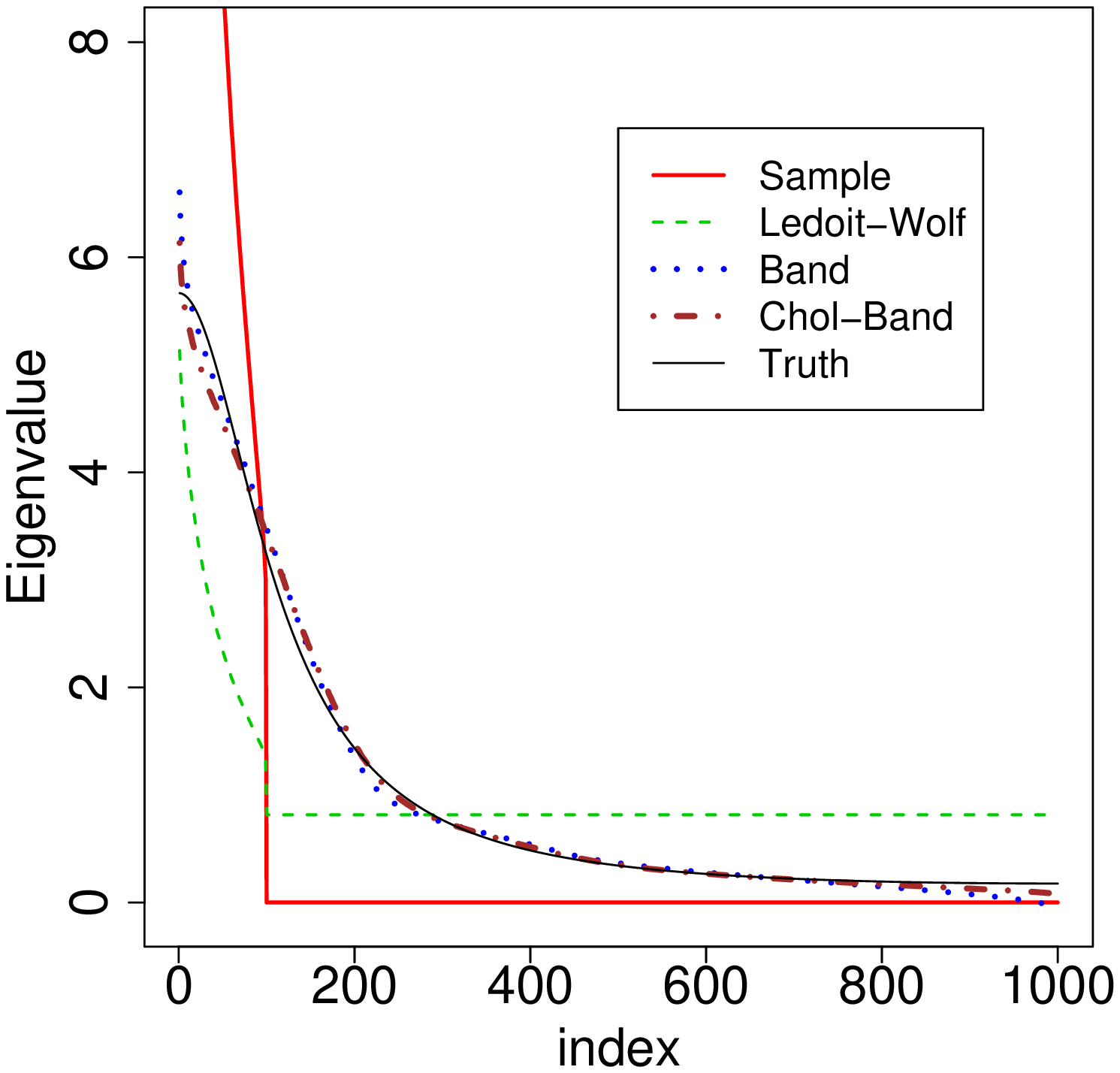}{2.7in}{$\Sigma_1$}{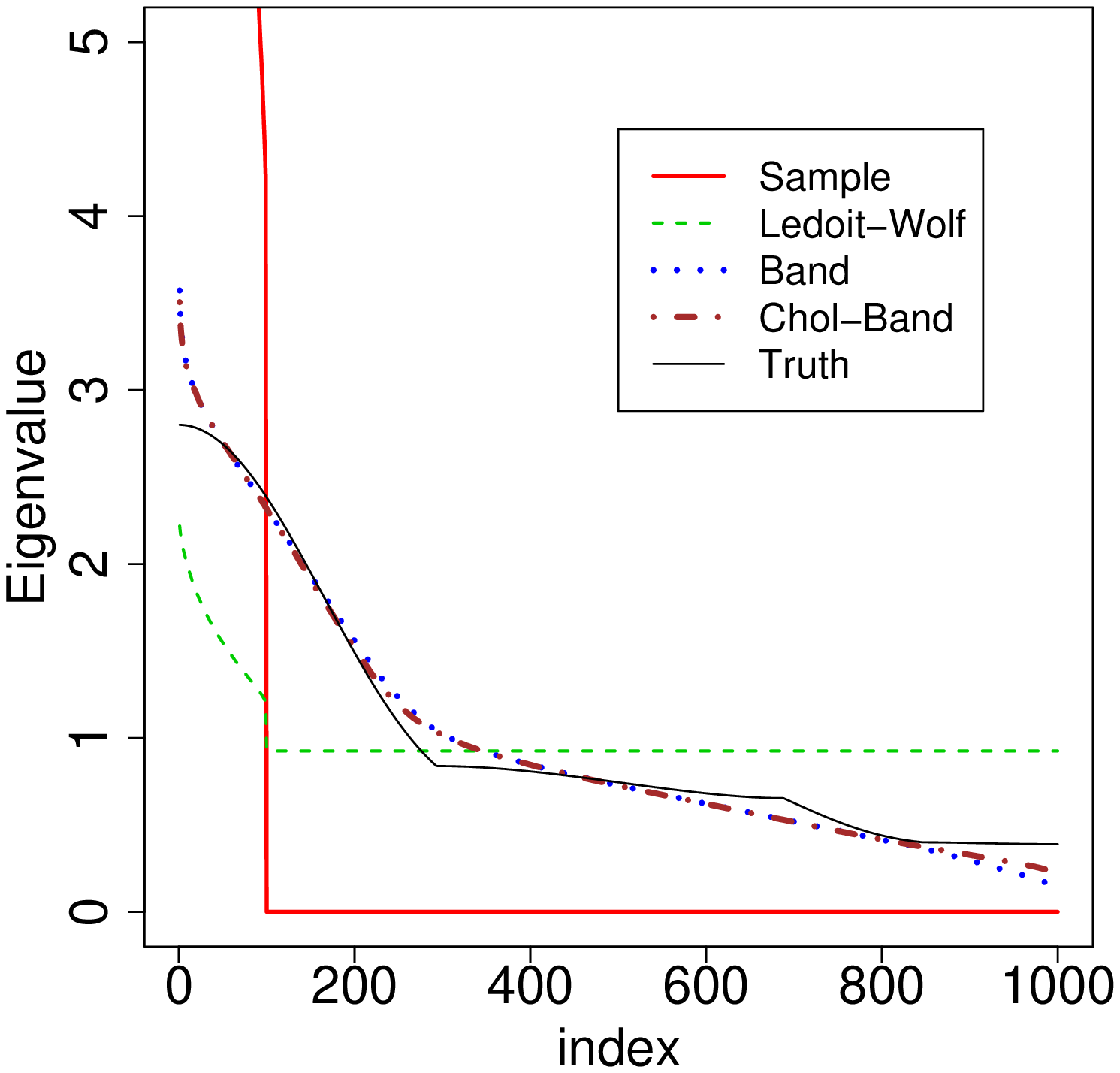}{2.7in}{$\Sigma_2$}
\caption{Scree plots (averaged over 50 replications) for 
$p=1000$. } \label{fig:scree}  
\end{figure}

In Figure \ref{fig:scree} we plot the averaged estimated eigenvalues in descending
order for sample banding, Cholesky banding, the sample covariance,
and the Ledoit-Wolf estimator, as well as the true eigenvalues, for both models and $p=1000$.  Since $n=100$, the sample covariance matrix only has 99 non-zero eigenvalues.  Cholesky banding and sample banding perform similarly 
for both models, with Cholesky banding having a slight edge for the small eigenvalues.  The banding methods outperform both the sample covariance and the Ledoit-Wolf estimator
by a considerable amount, especially for larger true eigenvalues. 

Since sample covariance banding does not necessarily produce a positive definite estimator, we also report the percentage of estimates that are positive definite in Table \ref{posdef}.  We see that for the dense matrix $\Sigma_1$, sample banding has 0 out of 50 positive definite realizations for $p \geq 200$; for the sparse matrix $\Sigma_2$, sample banding has 50 out of 50 positive definite realizations for $p \leq 200$, 49 for 
for $p = 500$ and 48 for $p=1000$; it is clear that, for both models, the larger $p$, the harder it is to keep positive definiteness. 

{\small
\begin{table}[!ht]
\caption{Percentage of banded sample covariance realizations that are positive definite (based on 50 replications)}
\begin{center}
\begin{tabular}{|c||c|c|c|c|c|}
\hline
 & \multicolumn{5}{c|}{$p$} \\
\cline{2-6}
Model & 30 & 100& 200 & 500& 1000 \\
\hline
$\Sigma_1$ &  66 & 8 & 0 & 0 & 0\\
\hline
$\Sigma_2$ & 100 & 100 & 100 & 98 & 96 \\
\hline
\end{tabular}
\label{posdef}
\end{center}
\end{table}}

Finally, Figure \ref{fig:kq} shows a plot of the averaged eigenspace agreement measure $K(q)$ versus $q$, for $p=1000$ variables, along with the line $K(q) = q$ representing perfect eigenspace agreement. We see that both Cholesky banding and ordinary banding perform roughly the same under this measure; both outperform
the sample covariance matrix and 
the Ledoit-Wolf estimator, which have the same eigenvectors, since  the Ledoit-Wolf estimator is a linear combination of the sample covariance and the identity.

\begin{figure}[ht!]
\twoImages{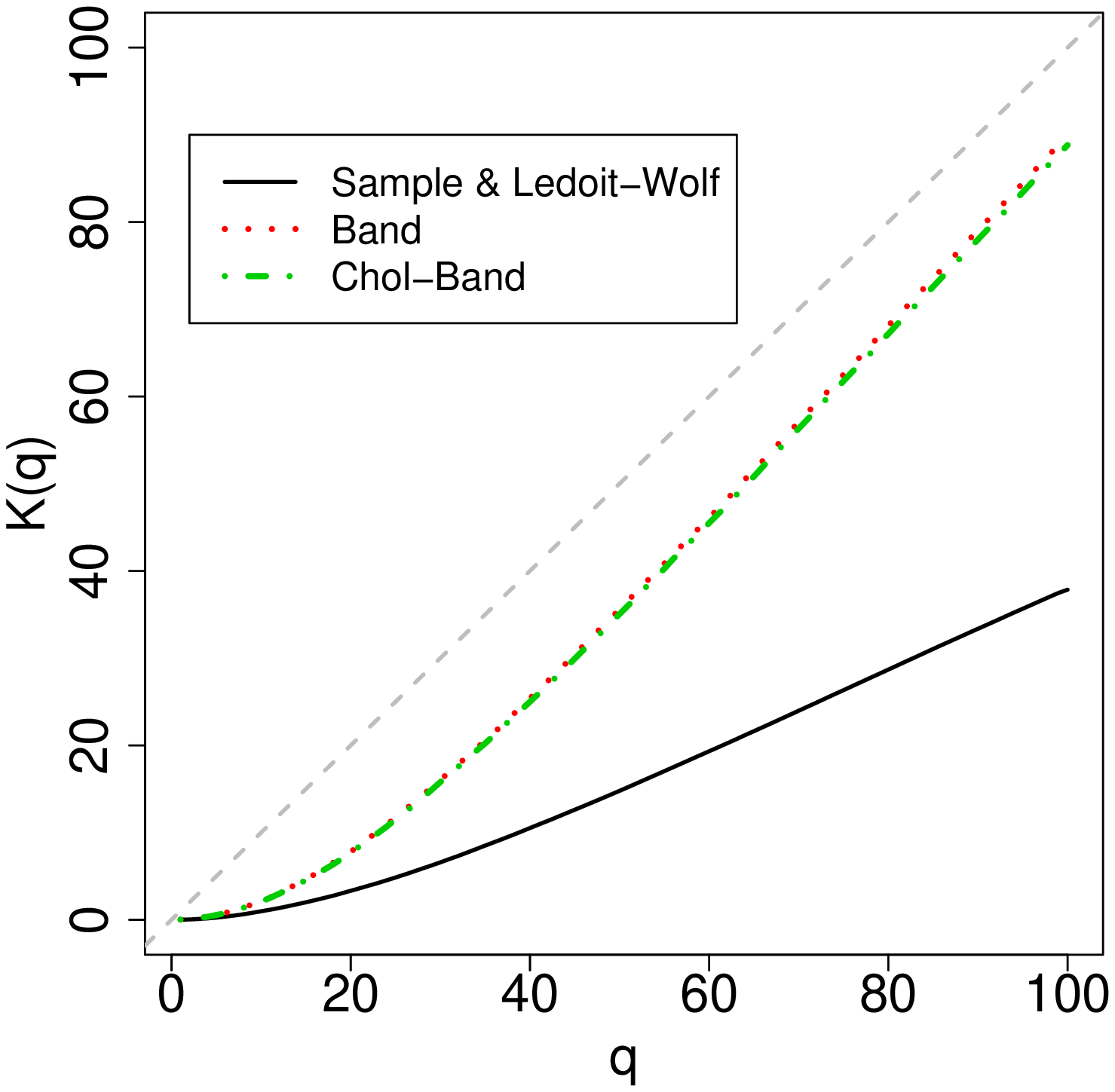}{2.7in}{$\Sigma_1$}{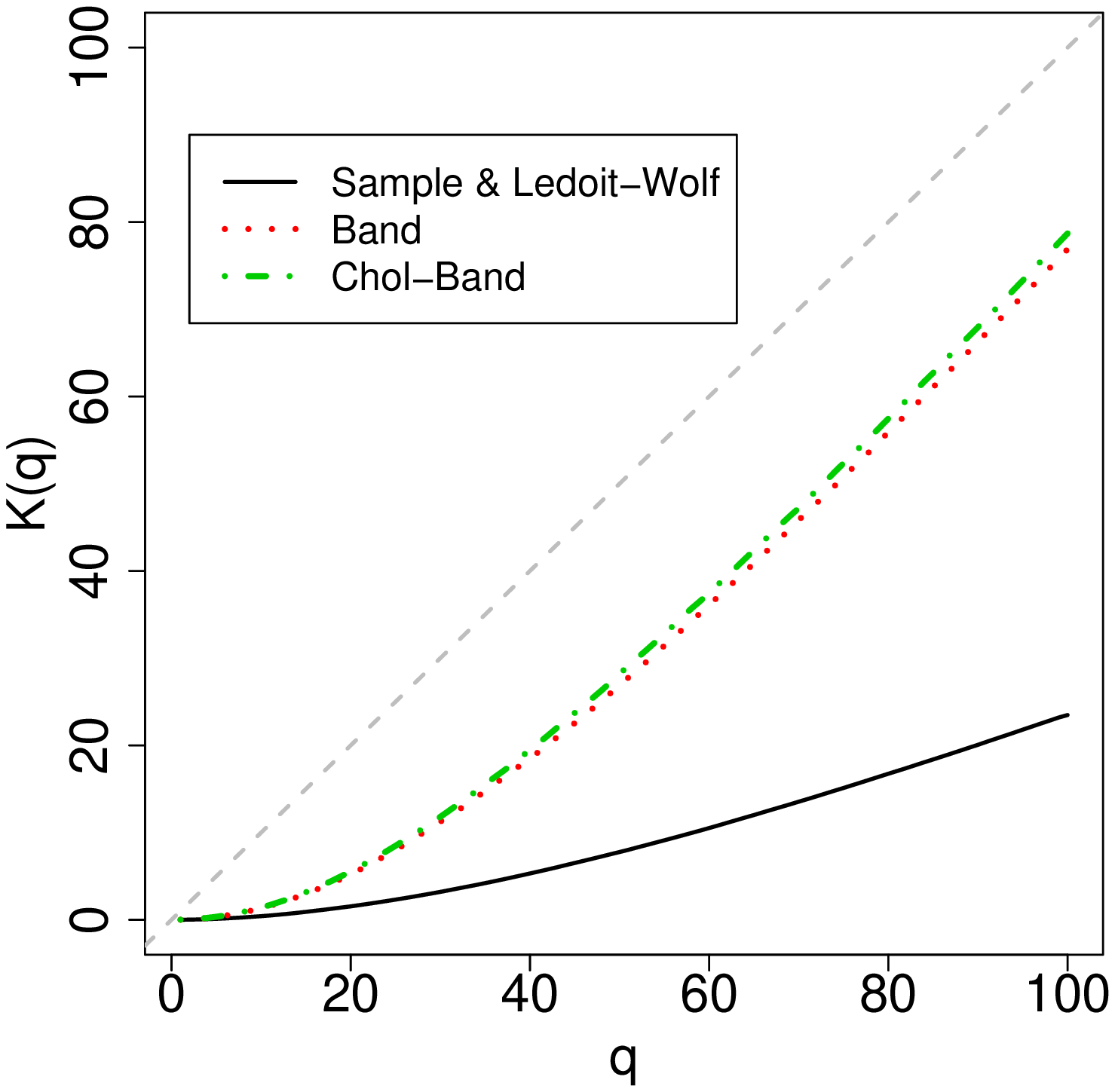}{2.7in}{$\Sigma_2$}
\caption{\label{fig:kq} $K(q)$ versus $q$ (averaged over 50 replications) for $p=1000$.  $K(q)=q$ corresponds to perfect agreement.}  
\end{figure}

\section{Sonar data example} \label{sec:realdata}
In this section we illustrate the effects of Cholesky banding and 
sample covariance banding on SONAR data from the UCI machine learning 
data repository \citep{uci}.  This dataset has 111 spectra from metal cylinders and 97 spectra from rocks, where each 
spectrum has 60 frequency band energy measurements. 
 These spectra were measured at multiple angles for the same objects, but following previous analyses of the dataset we assume independence of the spectra.

\begin{figure}[ht!]
\twoImages{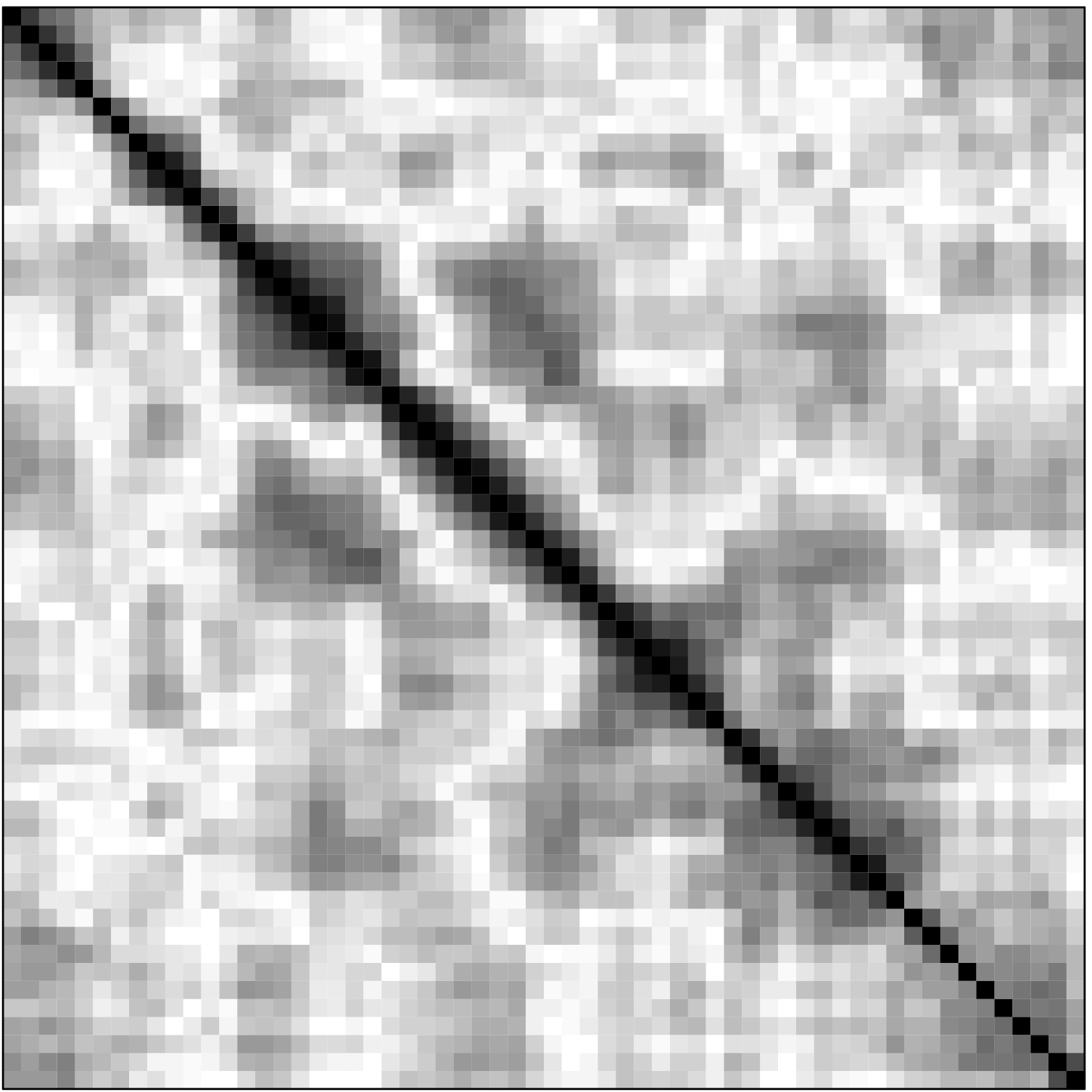}{2.2in}{Samp. Metal}{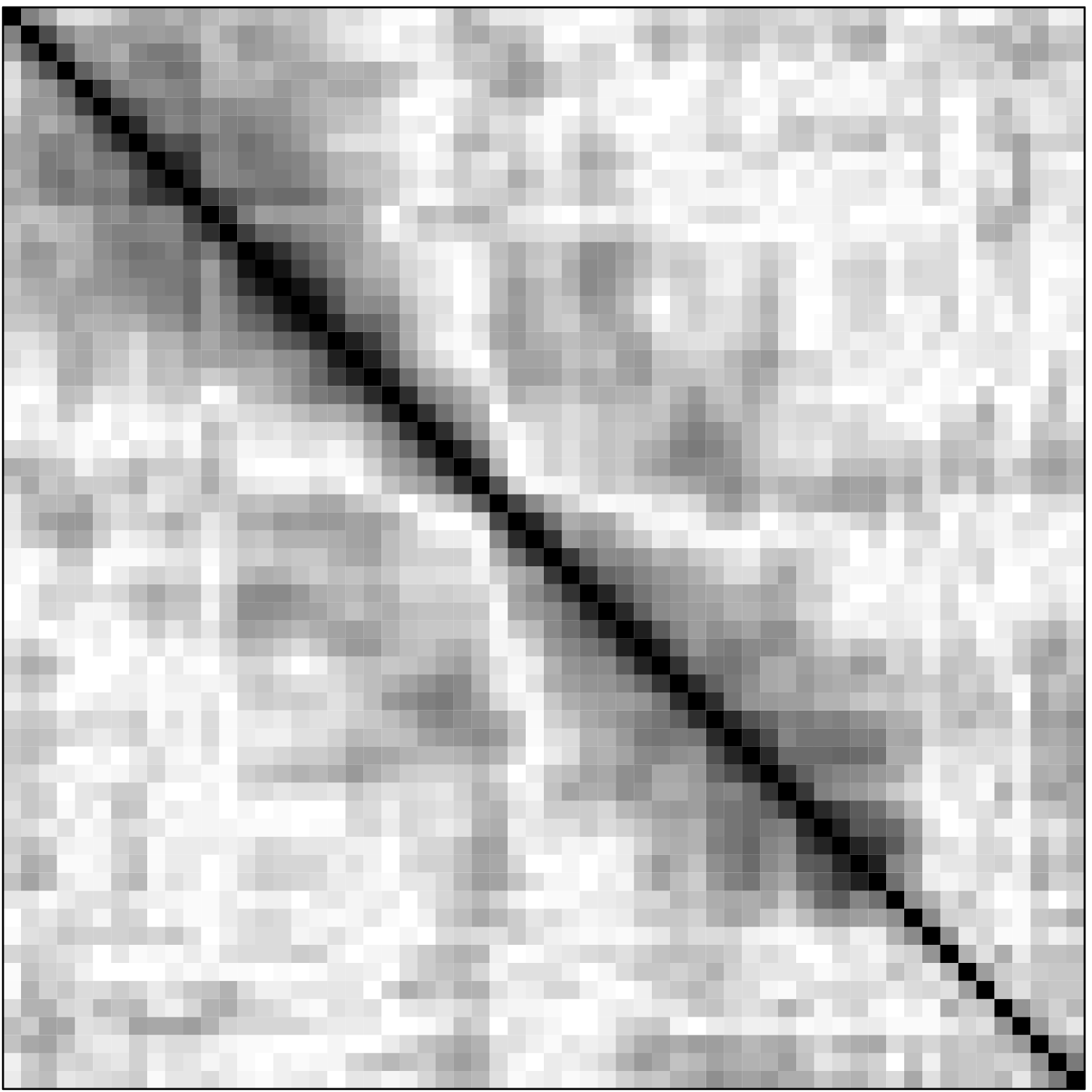}{2.2in}{Samp. Rock}
\twoImages{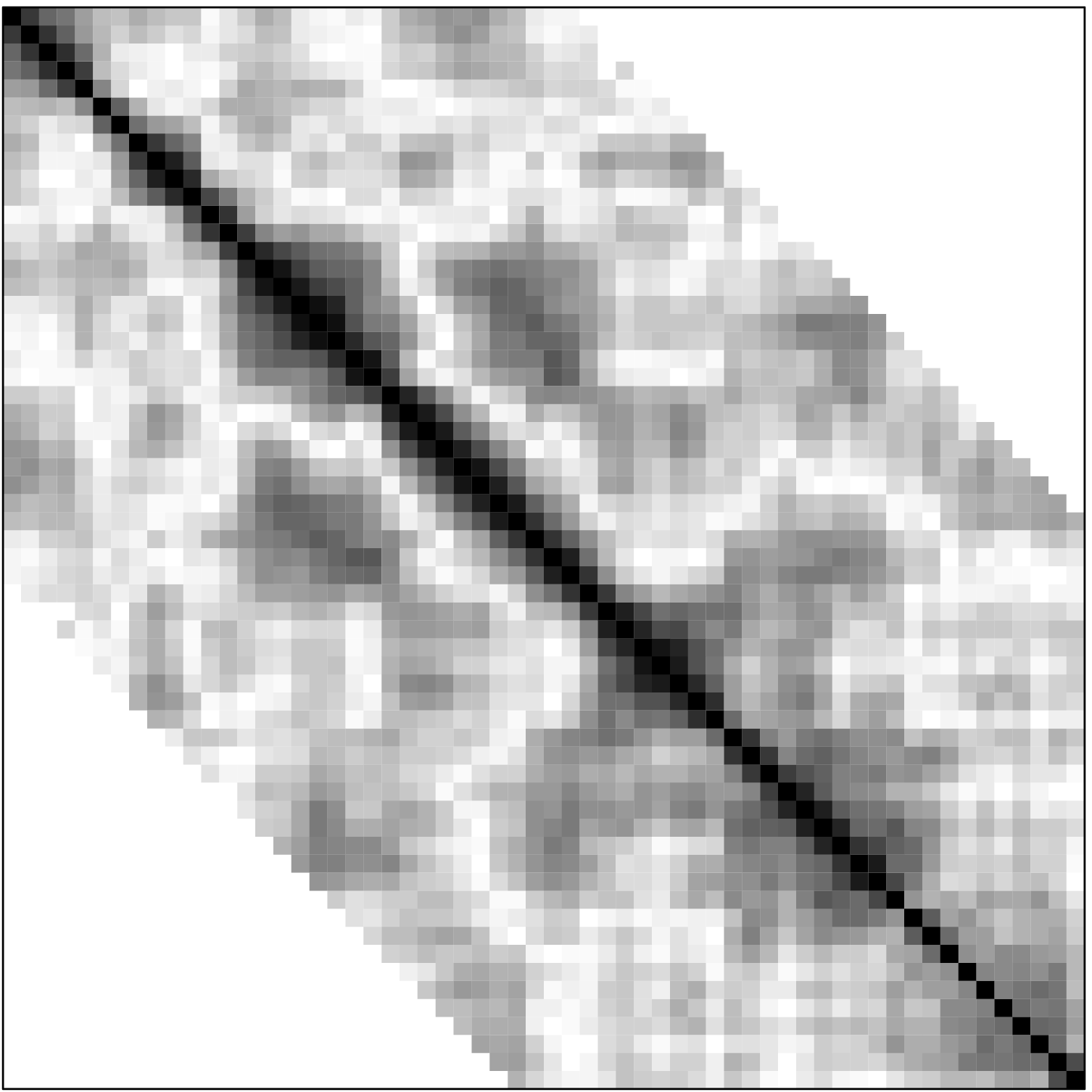}{2.2in}{Samp. Band. $k=31$ Metal}{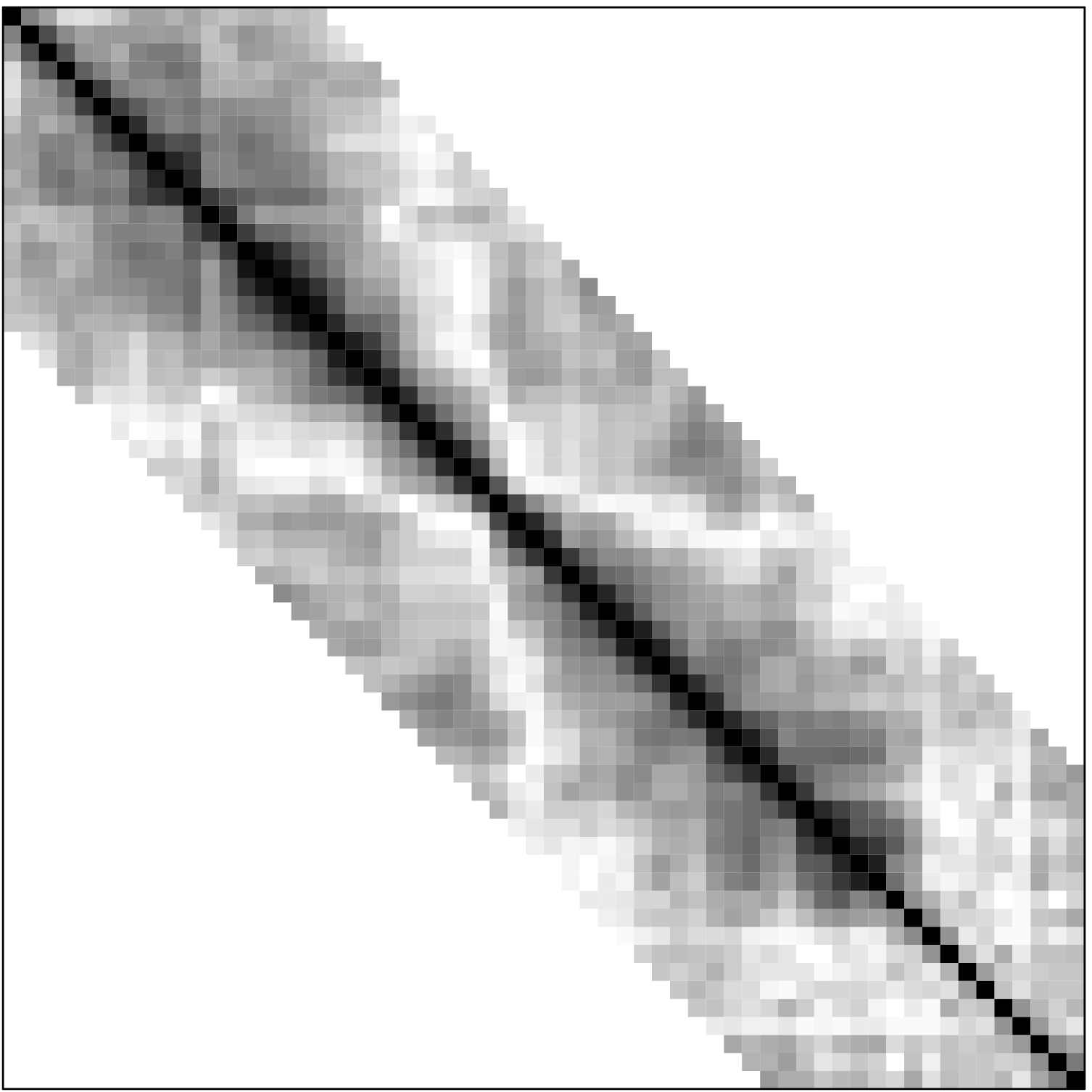}{2.2in}{Samp. Band. $k=17$ Rock}
\twoImages{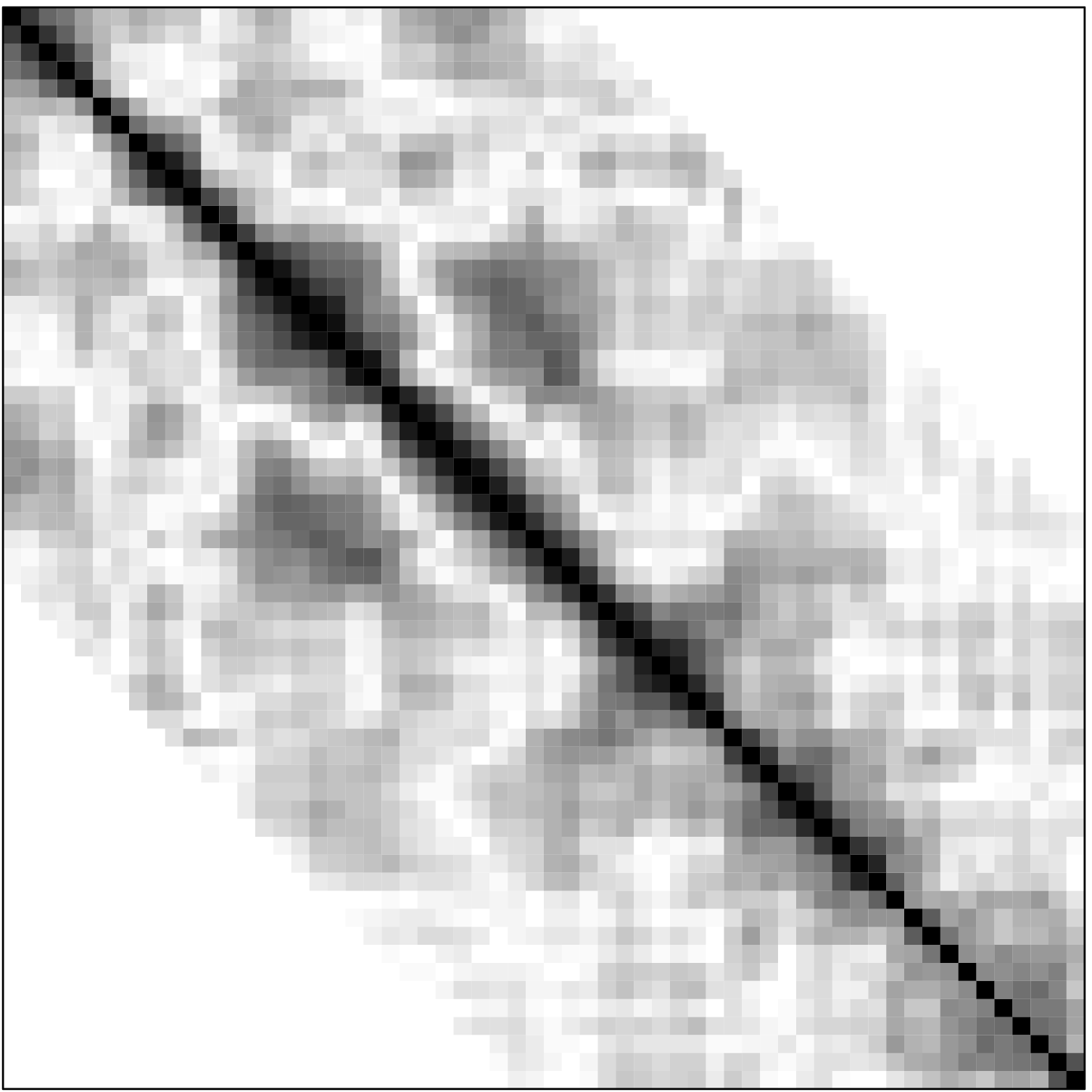}{2.2in}{Chol. Band. $k=31$ Metal}{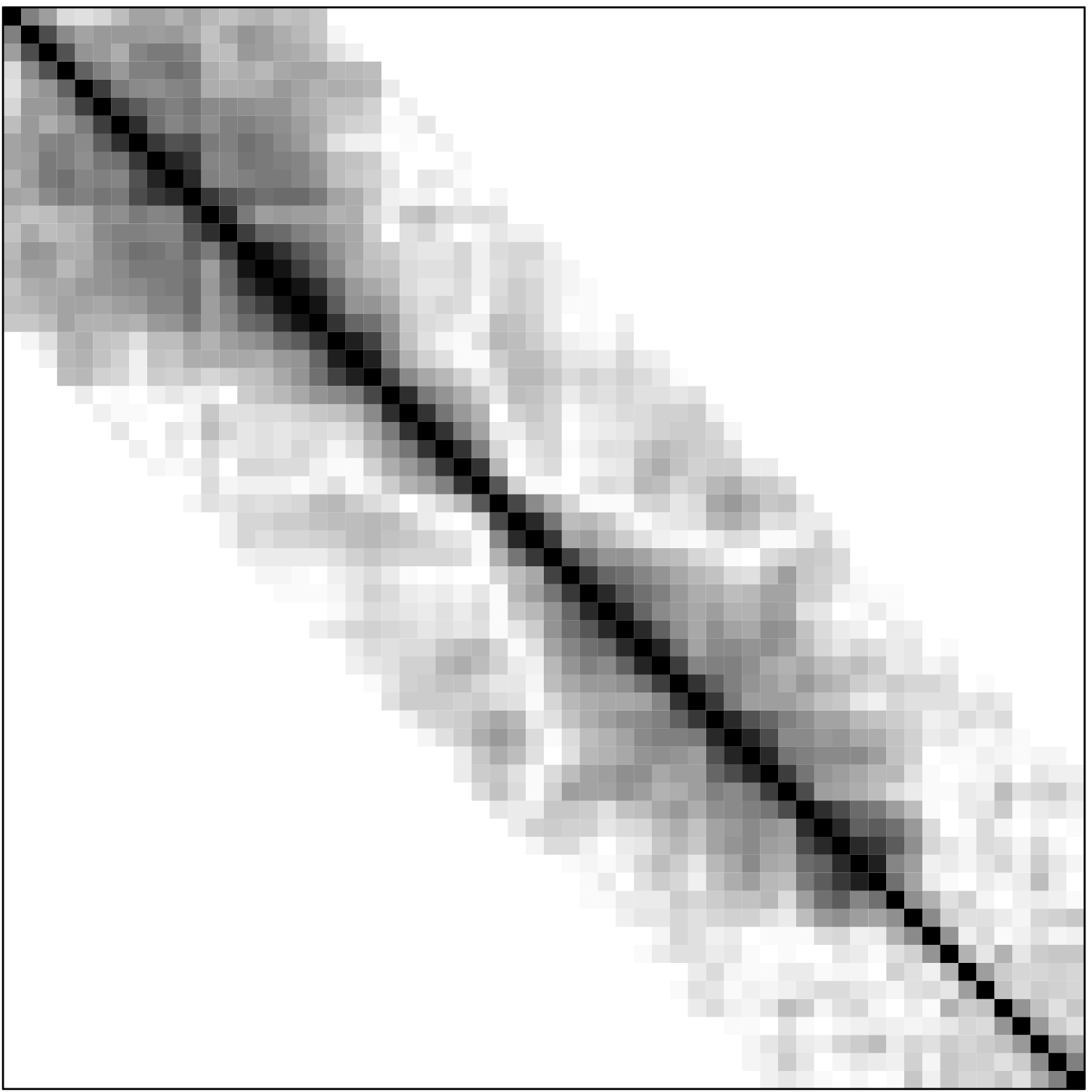}{2.2in}{Chol. Band. $k=17$ Rock}
\caption{\label{fig:heatest} Heatmaps of the absolute values of the correlation estimates.  White is magnitude 0 and black is magnitude 1.}  
\end{figure}

\begin{figure}[ht!]
\twoImages{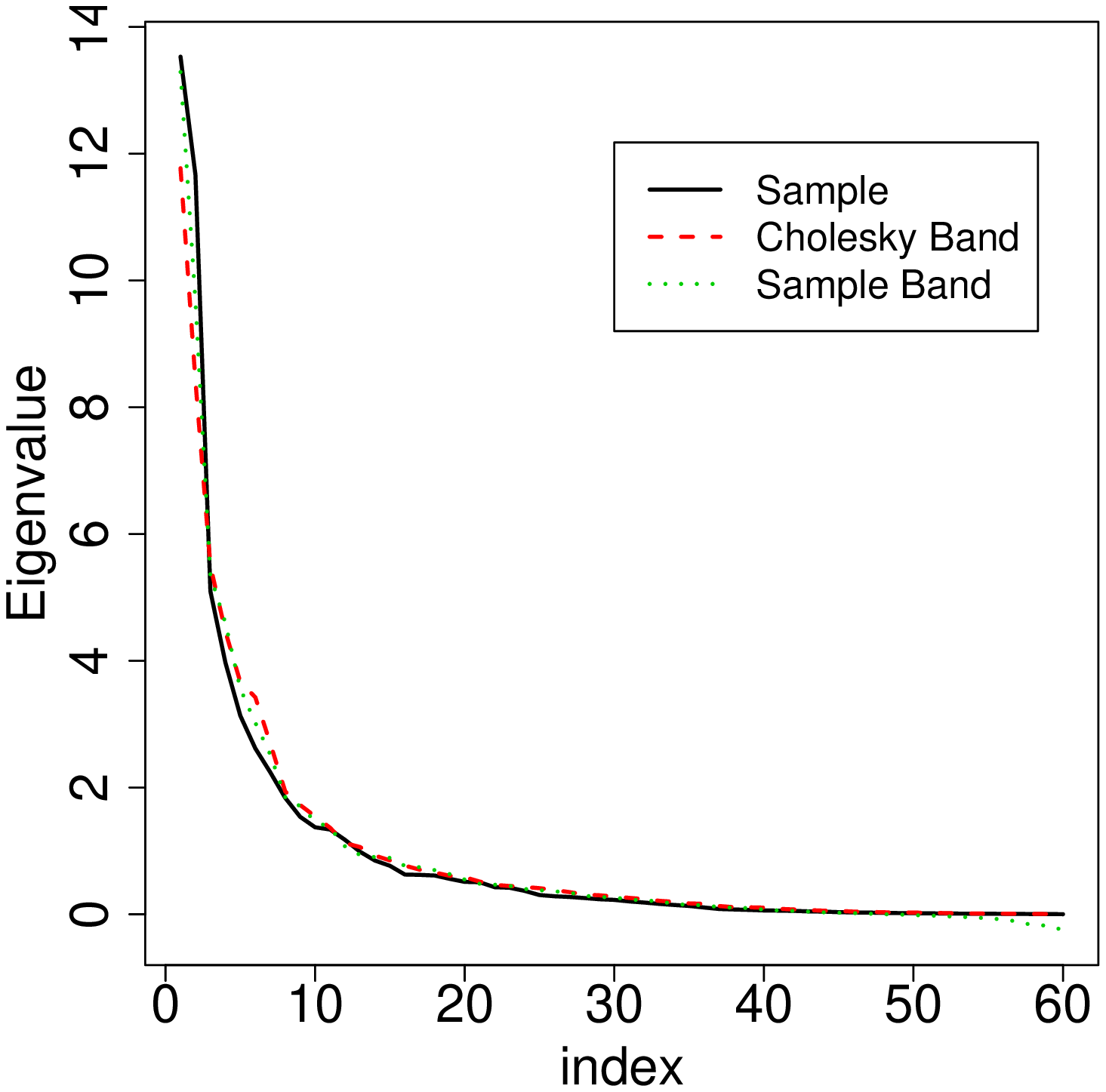}{2.7in}{(a)}{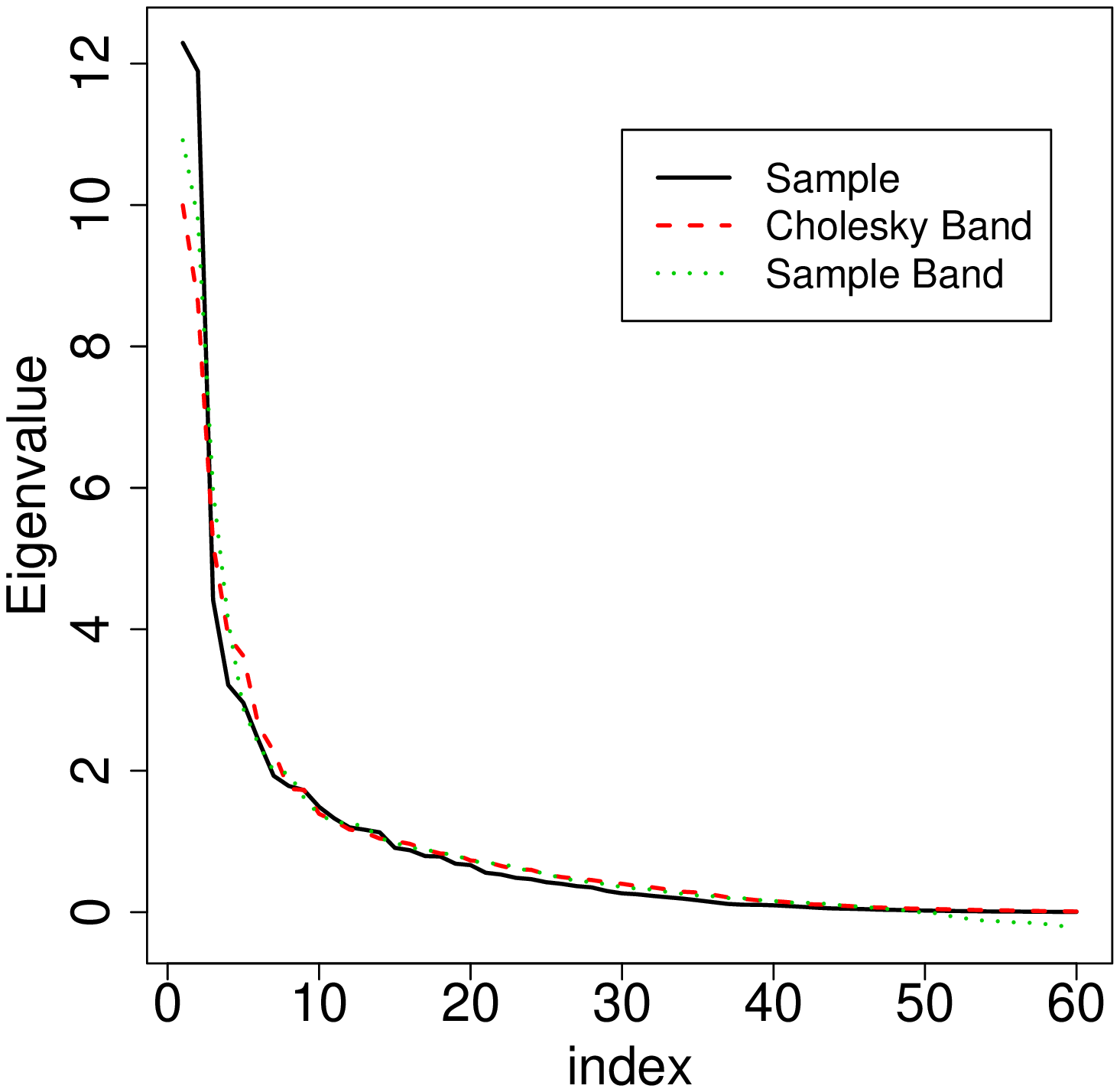}{2.7in}{(b)}
\twoImages{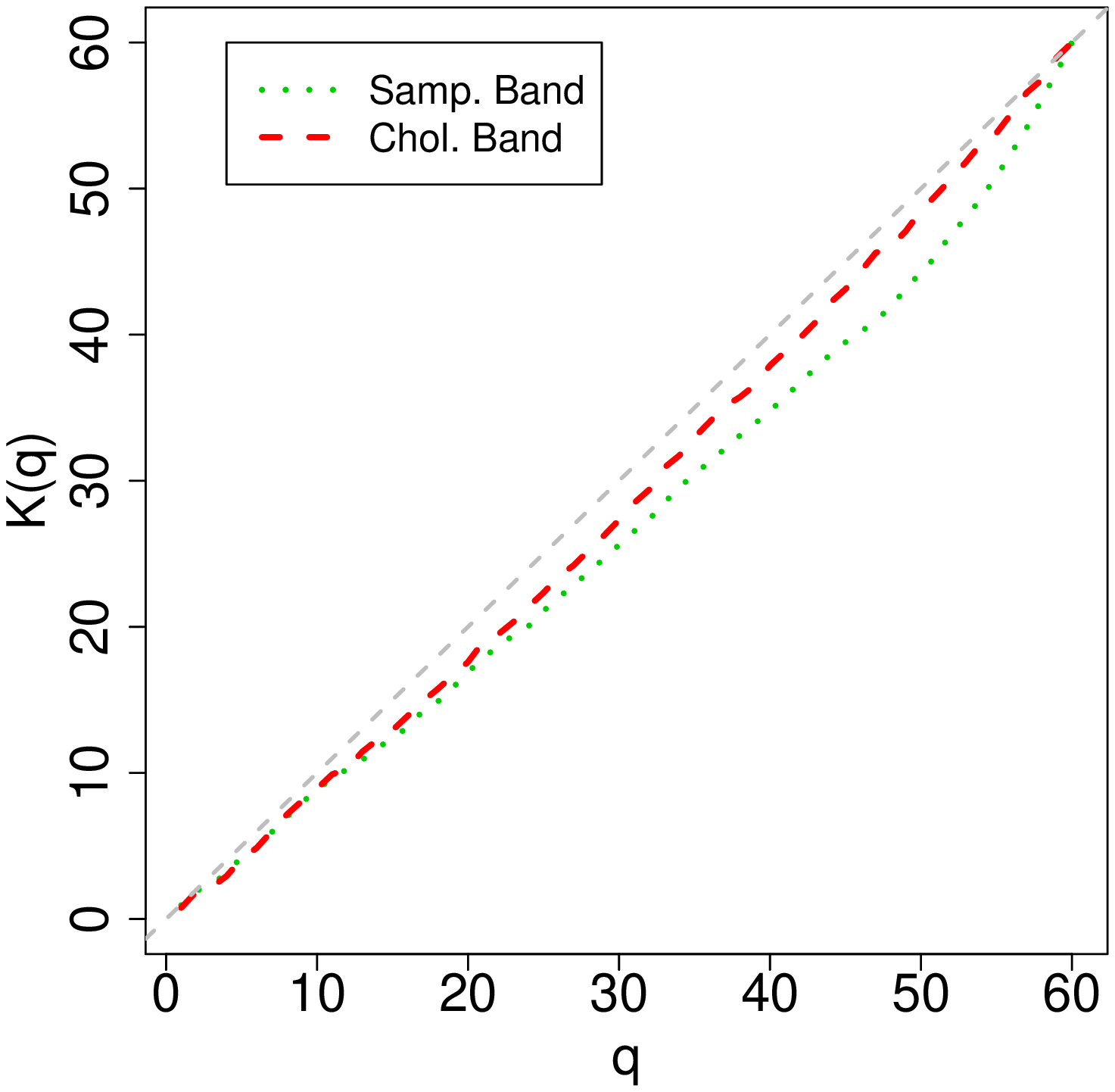}{2.7in}{(c)}{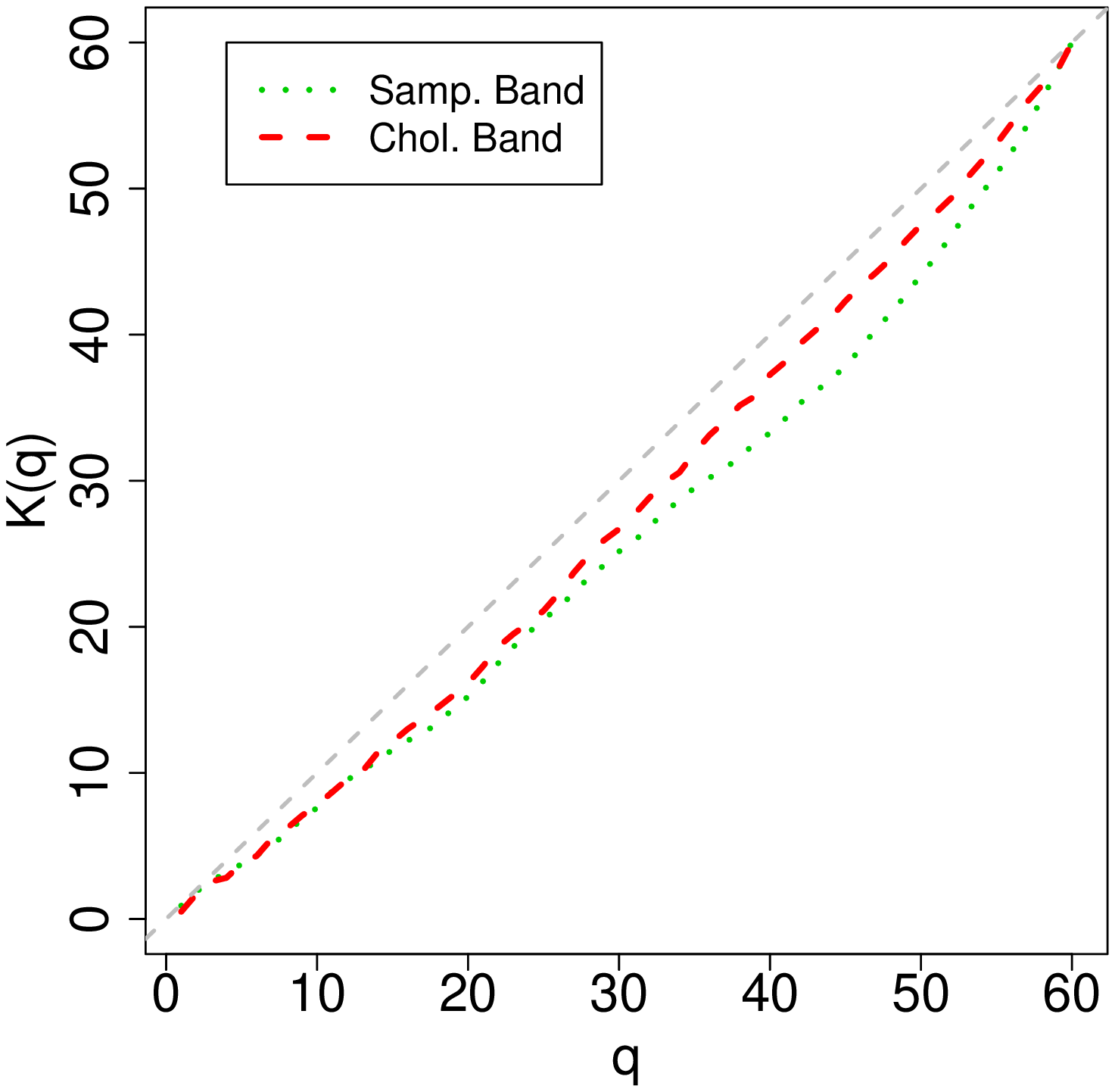}{2.7in}{(d)}
\caption{\label{fig:estscree} (a) Scree plots of the banded estimators, using the metal spectra;  (b) Scree
plots using the rock spectra; (c) Eigenspace agreement with the sample covariance matrix using the metal spectra.
(d) Eigenspace agreement with the sample covariance matrix using the rock spectra.  Note that in (c) and (d), $K(q)=q$, drawn
as the gray dashed line, corresponds
to perfect agreement, see (\ref{kq}) for the definition of $K(q)$.}  
\end{figure}

The top panel of Figure \ref{fig:heatest} shows heatmaps of the absolute 
values of the sample correlation matrices for metal and rock (we standardize the variables first to facilitate comparison for metal and rock spectra, which are on different scales).  Both matrices show a general pattern of correlations decaying as one moves away from the diagonal, which makes banding a reasonable option. 
 
The banding parameter $k$ for both banding methods was selected using the random-splitting
scheme of \citet{bl06},  
$$
\hat k = \argmin_k \frac{1}{N} \sum_{v=1}^{N} {\|\hat\Sigma^{(v)}_{(k)} - \tilde\Sigma^{(v)} \|_{F}} \ ,
$$
where $\hat\Sigma^{(v)}_{(k)}$ is the banded estimator with $k$ bands   
computed on the training data, and 
$\tilde\Sigma^{(v)}$ is the sample covariance of the validation data.  
To obtain these training and validation sets, the data was split at random  
$N=100$ times, with 1/3 of the sample used for training.  
For metal, 
Cholesky banding and sample banding both chose $\hat k = 31$ sub-diagonals; 
for rock, Cholesky banding chose $\hat k = 17$ and sample banding chose $\hat k = 18$.  Since these values are so close,  for easier visual comparison 
we show Cholesky banding and sample banding both computed with $\hat k = 17$ for 
the rock spectra.  The heatmaps of the absolute values of the banded 
estimators are shown in Figure \ref{fig:heatest}.
We see that Cholesky banding shrinks the non-zero correlations whereas the 
sample banding does not, which is the property that allows Cholesky banding to achieve positive definiteness.   

We also show eigenvalue plots for these estimators in 
Figure \ref{fig:estscree}(a) and (b), and the eigenspace agreement measure 
between the banded estimators and the sample covariance in Figure \ref{fig:estscree}(c) and (d), using the agreement measure (\ref{kq}).
We see that the sample covariance has the most spread out eigenvalues,
and the eigenvalues from Cholesky banding have the least spread, as we would expect.  For eigenvectors, there are no major differences between the estimators, a result consistent with simulations.

We also compared the performance of the various estimators if they are used in quadratic discriminant analysis (QDA) to discriminate between rock and metal. An observation $\V{x}$ is classified as rock ($k=0$) or metal ($k=1$) using the QDA rule,
$$
G(\V{x}) = \argmax_{k} \left\{\frac{1}{2} \log |\hat\Omega_{k}| - \frac{1}{2}(\V{x}-\V{\hat\mu}_{k})^{T}\hat\Omega_{k} (\V{x}-\V{\hat\mu}_{k}) + \log \hat\pi_{k} \right\},
$$
where $\hat\pi_{k}$ is the proportion of class $k$ observations in the training sample, $\V{\hat\mu}_{k}$ is the training class $k$ sample mean vector, and $\hat\Omega_{k}$ is the inverse covariance estimate computed with the class $k$ training observations.  A full description of QDA can be found in \citet{mardia79}.
In addition to banding the Cholesky factor of covariance and of the inverse, we also added a diagonal estimator of the covariance matrix (which corresponds to the 
naive Bayes classifier).  Leave-one-out cross validation was used to
estimate the testing error, and the banding parameters were selected with 10 
random splits with 1/3 of the data used for training, using Frobenius loss for covariance Cholesky banding and the validation likelihood for the inverse 
covariance Cholesky banding.  Banding the sample covariance was omitted because its lack of positive definiteness led to inversion problems.   
The test errors (\%) were  24.0(3.0) for the sample covariance, 
32.7(3.3) for naive Bayes,  20.2(2.8) for covariance Cholesky banding, 
and 14.9(2.5) for inverse Cholesky banding.  Both banding methods are substantially better than either estimating the whole dependency structure by the sample covariance or not estimating it at all (naive Bayes), and the inverse Cholesky 
banding does better in this case because it introduces sparsity directly 
in the inverse covariance.

\section{Summary and discussion}
In this paper we proposed a new regression interpretation of the Cholesky 
factor of the covariance matrix, which was previously only available for the Cholesky factor of the inverse.  Banding of this Cholesky factor gives a banded positive definite estimator of the covariance, unlike banding the sample covariance matrix, and was shown to perform better numerically.  An attractive property of the 
banded Cholesky estimator is its low computational cost, the same as that of banding the sample covariance matrix itself, and thus there is no computational penalty to pay for enforcing positive definiteness.  More complicated regularization obtained from penalties such as the lasso or the nested lasso 
can be applied using the same regression interpretation, 
but at an additional computational cost.  The proposed estimators 
perform well numerically under a variety of measures.

We also connected sparsity in banded Cholesky factors with sparsity in the covariance matrix and the inverse covariance matrix, which allows us to show that 
inverse Cholesky banding is equivalent to constrained maximum likelihood under the banded constraint.  Banding the Cholesky factor of the covariance itself is not equivalent to constrained maximum likelihood, but we found empirically they perform similarly.    In terms of convergence rates, one would expect a convergence result analogous to the one for inverse Cholesky banding established by \citet{bl06} to hold here as well, but this case presents substantial extra technical difficulties in analysis, due to the fact that the errors used as predictors in 
the regressions required to compute the Cholesky factor are unobservable and have to be estimated by residuals.  Nonetheless, we expect the method to be equally useful based on its good practical performance.  

\section*{Acknowledments}
We thank Richard Davis (Columbia) for pointing out the use of
regression on residuals in time series, and Bala Rajaratnam
(Stanford) for helpful discussions on sparse Cholesky factors.
A.J. Rothman's research is supported in part by the Yahoo!
Ph.D. Fellowship.  E. Levina's research is supported in part by
grants from the NSF (DMS-0505424, DMS-0805798).  J. Zhu's
research is supported in part by grants from the NSF (DMS-0705532
and DMS-0748389).

\bibliography{allref}
 
\end{document}